\theoremstyle{plain}
\theoremstyle{remark}
\begin{document}
\newtheorem{theorem}{Theorem}[section]
\newtheorem{lemma}{Lemma}
\newtheorem{conjecture}{Conjecture}
\newtheorem{corollary}{Corollary}
\newtheorem{definition}{Definition}
\newtheorem{property}{Property}
\newtheorem{remark}{Remark}

\bibliographystyle{plain}

\title{A Fair Power Allocation Approach to NOMA in Multi-user SISO Systems }

\author{Jos\'{e}~Armando~Oviedo,~\IEEEmembership{Student Member,~IEEE,}
        and~Hamid~R.~Sadjadpour,~\IEEEmembership{Senior Member,~IEEE}
\thanks{Copyright (c) 2017 IEEE. Personal use of this material is permitted. However, permission to use this material for any other purposes must be obtained from the IEEE by sending a request to pubs-permissions@ieee.org.}%
\thanks{The authors are with the Department of Electrical Engineering, University
of California, Santa Cruz, CA 95064 USA. (e-mail: \{xmando, hamid\}@soe.ucsc.edu.}}

\maketitle

\IEEEpeerreviewmaketitle

\begin{abstract}
	A non-orthogonal multiple access (NOMA) approach that always outperforms orthogonal multiple access (OMA) called Fair-NOMA is introduced. In Fair-NOMA, each mobile user is allocated its share of the transmit power such that its capacity is always greater than or equal to the capacity that can be achieved using OMA. For any slow-fading channel gains of the two users, the set of possible power allocation coefficients are derived. For the infimum and supremum of this set, the individual capacity gains and the sum-rate capacity gain are derived. It is shown that the ergodic sum-rate capacity gain approaches 1 b/s/Hz when the transmit power increases for the case when pairing two random users with i.i.d. channel gains. The outage probability of this approach is derived and shown to be better than OMA.

The Fair-NOMA approach is applied to the case of pairing a near base-station user and a cell-edge user and the ergodic capacity gap is derived as a function of total number of users in the cell at high SNR. This is then compared to the conventional case of fixed-power NOMA with user-pairing. Finally, Fair-NOMA is extended to $K$ users and it is proven that the capacity can always be improved for each user, while using less than the total transmit power required to achieve OMA capacities per user.

\end{abstract}

\section{Introduction}
\textit{Orthogonal multiple access} (OMA) is defined as a system that schedules multiple users in non-overlapping time slots or frequency bands during transmission. Therefore, if the signals for $k$ users, $k=1,\ldots, K$ are scheduled for transmission over a time period $T$, where $T$ is less than the coherence time of the channel, each user  transmits only $T/K$ amount of the total  transmission period (or fraction of the total bandwidth) with the entire transmit power $\xi$  allocated to that user.

\textit{Non-orthogonal multiple access} (NOMA) schedules transmission of the $K$ users’ signals simultaneously over the entire transmission period and bandwidth. Since the total transmit power $\xi $ must be shared between the $K$ users, a fraction $a_k\in(0,1)$ of the transmit power is allocated to user $k$, and $\sum_{k=1}^K a_k \leq 1$. In NOMA, each user employs successive interference cancellation (SIC) at the receiver to remove the interference of the signals from users with lesser channel gains \cite{InfTh:CT}.

An approach called Fair-NOMA is proposed for $K$ users in future wireless cellular networks. The underlying fundamental property of Fair-NOMA is that  users will always be guaranteed to achieve a capacity at least as good as OMA. The average capacities of two random users with i.i.d. channel SNR gains are derived along with the expected increase in capacity between OMA and NOMA. 



Another unique feature of our approach compared to the previous work is the fact that prior studies on NOMA have  focused on demonstrating that NOMA has advantages for increasing the capacity of the network when users are scheduled and paired based on their channel conditions (i.e. their location in the cell). Fair-NOMA does not rely on this condition since users' channel conditions are i.i.d. (i.e. location in the cell is not considered). Hence, all users will have equal opportunity to be scheduled, and thus is also completely "fair" from a time-sharing perspective. However, Fair-NOMA can be applied to any system with any scheduling and user-pairing approach.

The paper is organized as follows. Previous contributions on NOMA are discussed in section \ref{sec:previous}.
The assumptions and system model are outlined in section \ref{sec:system}. Section \ref{sec:fairNOMA} defines the Fair-NOMA power allocation region $\mathcal{A}_\text{FN}$, and develops its basic properties. The analysis of the effects of Fair-NOMA on the capacity of each user together with simulation results are provided in section \ref{sec:analysis}. The improvement in outage probability is derived and demonstrated in section \ref{sec:outage}. The application of Fair-NOMA to opportunistic user-pairing with near and cell-edge users is discussed and analyzed in section \ref{sec:fairNOMAMUD}. Section \ref{sec:MU-NOMA} defines Fair-NOMA for multi-user SISO systems. Finally, section \ref{sec:conclusion} concludes the paper and discusses  future work.

\section{Previous Work on NOMA} \label{sec:previous}

The concept of NOMA is based on using superposition coding (SC) at the transmitter and successive interference cancellation (SIC) at the receivers. This was shown to achieve the capacity of the channel by Cover and Thomas \cite{InfTh:CT}. The existence of a set of power allocation coefficients that allow all of the participating users to achieve capacity at least as good as OMA was suggested in \cite{FundWiCom:Tse}. 

Non-orthogonal access approaches using SC for future wireless cellular networks were mentioned in \cite{CompOMANOMA:WXP} as a way to increase single user rates when compared to CDMA. Schaepperle and Ruegg \cite{4GNOSig:SR} evaluated the performance of non-orthogonal signaling using SC and SIC in single antenna OFDMA systems using very little modifications to the existing standards, as well as how user pairing impacts the throughput of the system when the channel gains become increasingly disparate. This was then applied  \cite{WCSCMA:Schaep} to OFDMA wireless systems to evaluate the performance of cell edge user rates, proposing an algorithm that attempts to increase the average throughput and maintain fairness. These works do not assume to have the exact channel state information at the transmitter.

The concept of NOMA is  evaluated through simulation for full channel state information at the transmitter (CSIT) in the uplink \cite{ULNOMA:TakedaHiguchi} and downlink \cite{DLNOMA:TomidaHiguchi}, where the throughput of the system is shown to be on average always better than OMA when considering a fully defined cellular system evaluation, with both users occupying all of the bandwidth and time, and was compared to FDMA with each user being assigned an orthogonal channel. In \cite{SLDLNOMA:Saitoetal}, the downlink system performance throughput gains are evaluated by incorporating a complete simulation of an LTE cellular system (3GPP). 

Kim et. al. \cite{NOMABF:KKSK} developed an optimization problem that finds the power allocation coefficients for a broadcast MIMO NOMA system with $N$ base-station antennas serving $2N$ simultaneous users, where user-pairing is based on clustering two users with similar channel vector directions. Choi \cite{BFMIMONOMA:Choi} extends this work for a base-station with $L$ antennas and $K$ user-pairs ($L\geq2K$) by creating a two-stage beamforming approach with NOMA, such that closed-form solutions to the power allocation coefficients are found.

More recently, Sun et. al. explored the MIMO NOMA system ergodic capacity \cite{ErgMIMONOMA:SHIP} when the base-station has only statistical CSIT. Given that a near user has a larger expected channel gain than the cell-edge user, properties of the power allocation coefficients are derived, and a suboptimal algorithm to solving for the power allocation coefficients is proposed which maximizes the ergodic capacity. 


Fairness in NOMA systems is addressed in some works. The uplink case in OFDMA systems is addressed in \cite{5GNOMAUp:AXIT} by using an algorithm that attempts to maximize the sum throughput, with respect to OFDMA and power constraints. The fairness is not directly addressed in the problem formulation, but is evaluated using Jain's fairness index. In   \cite{PropFairNOMA:LMP}, a proportional fair scheduler and user pair power allocation scheme is used to achieve fairness in time and rate.  In \cite{FairnessNOMA5G:TK}, fairness is achieved in the max-min sense, where users are paired such that their channel conditions are not too disparate, while the power allocation maximizes the rates for the paired users. 

Ding et. al. \cite{5GNOMA:DFP} provide an analysis for fixed-power NOMA (F-NOMA) and cognative radio NOMA (CR-NOMA). In F-NOMA, with a cell that has $N$ total users, it is shown that the probability that NOMA outperforms OMA asymptotically approaches $1$. In CR-NOMA, a primary user is allowed all of the time and bandwidth, unless an opportunistic secondary user exists with a stronger channel condition relative to the primary user, such that transmitting both of their signals will not reduce the primary user's SINR below some given threshold. It is shown that the diversity order of the $n$-th user is equal to the order of the weaker $m$-th user, leading to the conclusion that this approach benefits from pairing the two users with the strongest channels.

The main contribution of this work is to demonstrate that NOMA capacity can fundamentally always outperform OMA capacity for each user, regardless of the channel conditions of the users, and to derive exactly what the power allocation should be for each user to achieve this, based on their channel gains. Furthermore, the expected sum-rate capacity gain made when the users are paired with i.i.d. random channels is 1 bps/Hz at high SNR, even in the extreme case when all of the transmit power is allocated to the stronger user. 
Furthermore, the outage probabilities are also derived for this case, and shown to decrease for each user, but significantly for the weaker user. The approximate sum-rate capacity gain for the case of pairing the strongest and weakest users in the cell is derived for the high SNR regime, and compared to user-pairing with fixed-power approaches. Lastly, the more general case of $K$ user SISO NOMA is considered, and it is fundamentally proven that a NOMA power allocation strategy always exists that achieves equal or greater capacity per user when compared to OMA.



\section{System Model and Capacity} \label{sec:system}

Let a mobile user $i$ have a signal $x_i$ transmitted from a single antenna base-station (BS). The channel gain is $h_i\in\mathbb{C}$ with SNR gain p.d.f. $f_{|h|^2}(w) = \frac{1}{\beta}e^{-\frac{w}{\beta}}$, and receiver noise is complex-normal distributed $z_i\sim\mathcal{CN}(0,1)$. In the two-user case, if user-1 and user-2 transmit their signals half of period $T$ utilizing the entire transmit power $\xi$, 
then the received signal for each user is $y_i = h_i \sqrt{\xi}  x_i + z_i, i=1,2$. If $\mathbb{E}[ |x_i|^2 ] = 1$, the information capacity of each user is  
	$C_i^\text{O} = \frac{1}{2}\log_2\left(1 + \xi |h_i|^2\right).$ 
The sum-rate capacity for OMA is therefore $S_\text{O} = C_1^\text{O}+C_2^\text{O}$. For the case of NOMA, it is assumed that user-2 channel gain is the larger one ($|h_2|^2 > |h_1|^2$), then user-2  can perform SIC at the receiver by  treating its own signal as noise and decoding user-1's signal first. If the power allocation coefficient for user-2 is $a\in(0,1/2)$, then user-1's signal is allocated $1-a$ transmit power, and the received signals for both users are
\begin{align}
	y_i &= \sqrt{(1-a)\xi }h_i x_1 + \sqrt{a\xi }h_i x_2 + z_i,  i=1,2.
\end{align}
Since $|h_2|^2>|h_1|^2$, it follows that 
	$\frac{(1-a)\xi |h_2|^2}{a\xi |h_2|^2+1} > \frac{(1-a)\xi |h_1|^2}{a\xi |h_1|^2+1},$
which allows user-2's receiver to perform SIC and remove the interference from user-1's signal. Hence, the capacity for each user is
\begin{align}
		&C_1^\text{N}(a) = \log_2\left( 1 + \dfrac{(1-a)\xi |h_1|^2}{a\xi |h_1|^2 + 1}\right), \\
		&C_2^\text{N}(a) = \log_2\left( 1 + a\xi |h_2|^2\right).
\end{align}
The sum-rate capacity for NOMA is therefore $S_\text{N}(a) = C_1^\text{N}(a)+C_2^\text{N}(a)$. These capacity expressions are used in each case of OMA and NOMA to find the values of $a$ that make NOMA "fair."

\section{Fair-NOMA Power Allocation Region} \label{sec:fairNOMA}
In order for user-1  NOMA capacity to be greater than or equal to OMA capacity, it must be true that $C_1^\text{N}(a)\geq C_1^\text{O}$. Solving this inequality for $a$ gives $a \leq \frac{\sqrt{1 + \xi |h_1|^2} - 1}{\xi |h_1|^2}$. Similarly, for user-2 when $C_2^\text{N}(a)\geq C_2^\text{O}$ results in $a \geq \frac{\sqrt{1 + \xi |h_2|^2} - 1}{\xi |h_2|^2}$. Both the upper and lower bounds on the transmit power fraction $a$ to achieve better sum and individual capacities have the form $a(x) = (\sqrt{1 + \xi x}-1)/(\xi x)$.

Define 
\begin{equation}
	\begin{array}{ccc} a_{\inf} = \dfrac{\sqrt{1 + \xi |h_2|^2} - 1}{ \xi |h_2|^2}$  and $a_{\sup} = \dfrac{\sqrt{1 + \xi |h_1|^2} - 1}{\xi |h_1|^2}. \end{array}
\end{equation} 
Then by Property 1 in \cite{FairNOMAInfocom2016}, it is clear that if $|h_2|^2>|h_1|^2 \Rightarrow a_\text{sup} > a_\text{inf}$. The Fair-NOMA power allocation region is therefore defined as $\mathcal{A}_\text{FN}=[a_\text{inf}, a_\text{sup}]$, and selecting any $a\in\mathcal{A}_\text{FN}$ gives
	$ C_1^\text{N}(a)\geq C_1^\text{O},$
	$C_2^\text{N}(a)\geq C_2^\text{O},$ and 
	$S_\text{N}(a) > S_\text{O}$.
Since the sum-rate capacity $S_\text{N}(a)$ is a monotonically increasing function of $a$, then $a_\text{sup} = \arg{\displaystyle\max_{a\in\mathcal{A}_\text{FN}}}(C_2^\text{N}(a))$ also maximizes $S_\text{N}(a)$. The sum-rate capacity of NOMA is strictly larger than the sum-rate capacity of OMA because at the least one of the user's capacities always increases.

\begin{theorem}\label{thm:thm1}
	For a two-user NOMA system that allocates power fraction $1-a$ to user-1 and $a$ to user-2, such that $a\in\mathcal{A}_\text{FN}$, the sum-rate $S_\text{N}(a)$ is a monotonically increasing function of both $|h_1|^2$ and $|h_2|^2$. 
\end{theorem}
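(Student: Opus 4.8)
\emph{Proof plan.} I would start by writing the sum-rate out in full,
\begin{equation}
S_\text{N}(a)=\log_2\!\left(1+\frac{(1-a)\xi|h_1|^2}{a\xi|h_1|^2+1}\right)+\log_2\!\left(1+a\xi|h_2|^2\right),
\end{equation}
and treat the power fraction $a$ as a fixed parameter, so that the claim is just that the map $(|h_1|^2,|h_2|^2)\mapsto S_\text{N}(a)$ is nondecreasing in each coordinate on the relevant domain. The plan is then to split $S_\text{N}(a)=C_1^\text{N}(a)+C_2^\text{N}(a)$ and argue each piece separately: $C_2^\text{N}(a)$ involves only $|h_2|^2$ and $C_1^\text{N}(a)$ only $|h_1|^2$, so there is no cross-interaction to control.

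For the user-2 term, $C_2^\text{N}(a)=\log_2(1+a\xi|h_2|^2)$ is visibly increasing in $|h_2|^2$ whenever $a>0$, and $a\in\mathcal{A}_\text{FN}\subset(0,1/2)$ guarantees exactly that. For the user-1 term the useful move is to combine the fraction inside the logarithm, which collapses $C_1^\text{N}(a)$ to a difference of two logarithms,
\begin{equation}
C_1^\text{N}(a)=\log_2\!\left(1+\xi|h_1|^2\right)-\log_2\!\left(1+a\xi|h_1|^2\right).
\end{equation}
Differentiating with respect to $w=|h_1|^2$ yields a derivative whose sign is that of $\xi(1-a)/[(1+\xi w)(1+a\xi w)]$, which is strictly positive since $a<1$. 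Hence $C_1^\text{N}(a)$, and therefore $S_\text{N}(a)$, is strictly increasing in $|h_1|^2$ (on the range $|h_1|^2<|h_2|^2$ where the SIC ordering, and thus these formulas, remain valid) and increasing in $|h_2|^2$. The fairness hypothesis plays no role beyond pinning $a$ to $(0,1/2)$ and keeping us in the regime where the capacity expressions apply.

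The one genuine subtlety — the step I expect to be the main obstacle — is the meaning of ``$a\in\mathcal{A}_\text{FN}$'' as the gains move, since $\mathcal{A}_\text{FN}=[a_\text{inf},a_\text{sup}]$ is itself a function of $|h_1|^2$ and $|h_2|^2$ (increasing $|h_1|^2$ pushes $a_\text{sup}$ down, increasing $|h_2|^2$ pushes $a_\text{inf}$ down), so for a fixed $a$ the claim should be read on the sub-range of gains for which $a$ stays admissible, where nothing above changes. If instead the theorem is intended for the sum-rate–maximizing allocation $a=a_\text{sup}=(\sqrt{1+\xi|h_1|^2}-1)/(\xi|h_1|^2)$, then $a$ is a decreasing function of $|h_1|^2$, the first term reduces to $C_1^\text{O}=\tfrac12\log_2(1+\xi|h_1|^2)$ (increasing) while the second term $\log_2(1+a_\text{sup}\xi|h_2|^2)$ becomes decreasing in $|h_1|^2$, so $|h_1|^2$-monotonicity is no longer automatic. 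I would settle that case by the substitution $s=\sqrt{1+\xi|h_1|^2}$, under which $a_\text{sup}=1/(s+1)$ and the inequality $\partial S_\text{N}(a_\text{sup})/\partial|h_1|^2\ge 0$ reduces, after clearing denominators and using $\xi|h_2|^2\ge 0$, to the trivial bound $s+1\ge s$ (with the tightest constraint arising as $|h_2|^2\to\infty$).
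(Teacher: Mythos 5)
Your proposal is correct, and for the substantive case it takes a cleaner route than the paper. The paper's proof reads the theorem exactly as you suspected in your ``subtlety'' paragraph: it reduces the claim to the endpoints of $\mathcal{A}_\text{FN}$ (calling $a=a_\text{inf}$ trivial and arguing the endpoints suffice for all $a$ in the region), and then proves the $a=a_\text{sup}$ case --- where $a_\text{sup}$ varies with $|h_1|^2$ --- by direct differentiation of $S_\text{N}(a_\text{sup})$ with respect to $|h_1|^2$, grinding through a bulky quotient and simplifying the numerator to the inequality $2+\xi|h_1|^2-2\sqrt{1+\xi|h_1|^2}\geq 0$; monotonicity in $|h_2|^2$ is immediate there just as in your argument. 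Your substitution $s=\sqrt{1+\xi|h_1|^2}$, giving $a_\text{sup}=1/(s+1)$ and
\begin{equation}
\frac{\partial}{\partial s}\ln\!\left(s\cdot\frac{s+1+\xi|h_2|^2}{s+1}\right)=\frac{1}{s}+\frac{1}{s+1+\xi|h_2|^2}-\frac{1}{s+1}>0,
\end{equation}
reaches the same conclusion with essentially no algebra and makes transparent why the worst case is $|h_2|^2\to\infty$, which is a genuine simplification of the paper's computation. Two small remarks: you should also dispose explicitly of the other endpoint $a=a_\text{inf}$ (where $a$ depends on $|h_2|^2$), though it is immediate since increasing $|h_2|^2$ raises $C_2^\text{N}(a_\text{inf})=C_2^\text{O}$ and lowers $a_\text{inf}$, which only helps $C_1^\text{N}$; and your fixed-$a$ reading, while correct, is not the content of the theorem as the paper uses it --- the gain-dependent $a_\text{sup}$ case you treated last is the one that matters.
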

\begin{proof}
	See appendix \ref{proof:thm1}.
\end{proof}
This result implies that as the channel gain for the weaker user increases, the total capacity increases while the power allocation to the stronger user decreases. 
This means that, as the channel gain $|h_1|^2$ increases towards the value of $|h_2|^2$, then the capacity gain by user-1 is greater than the capacity loss by user-2. In the extreme case where $ |h_1|^2 \rightarrow |h_2|^2$, then $a_\text{sup}\rightarrow a_\text{inf}$, and both $C_1^\text{N}(a)$ and $C_2^\text{N}(a)\rightarrow C_2^\text{O}$.  In other words, the Fair-NOMA capacity is upper bounded by the capacity obtained by allocating all of the transmit power to the stronger user.
This is somewhat related to the multiuser diversity concept result in \cite{knopp95} for OMA systems, which suggests allocating all the transmit power to the stronger users will increase the overall capacity of the network.

In contrast, with the increase in $|h_2|^2$,  $C_2^\text{N}(a_\text{sup})$ increases and hence the capacity gains from Fair-NOMA increase. Therefore with Fair-NOMA, as is the same with the previously obtained result for fixed-power allocation NOMA, when $|h_2|^2 - |h_1|^2$ increases, $S_\text{N}(a)-S_\text{O}$ also increases \cite{InfTh:CT,5GNOMA:DFP}. 
This will be further exemplified in Section \ref{sec:fairNOMAMUD}, Theorem \ref{thm:NOMAminmax}. 

\vspace{-0.05in}
\section{Analysis of Fair-NOMA Capacity}\label{sec:analysis}

\subsection{Expected Value of Fair-NOMA Capacity}

The expected value of the Fair-NOMA capacities of the two users depend on the power allocation coefficient $a$. In order to determine the bounds of this region, the expected value of capacity of each user is derived for the cases of $a_\text{inf}$ and $a_\text{sup}$ and compared with that of OMA.

Since the channels of the two users are i.i.d. random variables, let the two users selected have channel SNR gains of $|h_i|^2$ and $|h_j|^2$, where $f_{|h|^2}(x)=\frac{1}{\beta}e^{-\frac{x}{\beta}}$. Since we call the user with weaker (stronger) channel gain user-$1$ (user-$2$), then $|h_1|=\min\{|h_i|^2, |h_j|^2\}$ and $|h_2|^2=\max\{|h_i|^2, |h_j|^2\}$. Therefore, the joint pdf of $|h_1|^2$ and $|h_2|^2$ is 
\begin{equation}
	f_{|h_1|^2, |h_2|^2}(x_1,x_2) = \frac{2}{\beta^2}e^{-\frac{x_1+x_2}{\beta}}.
\end{equation}
It is shown \cite{FairNOMAInfocom2016} that the ergodic capacities and the sum-rate of users in OMA are
\begin{align}
	&\mathbb{E}[ C_1^\text{O} ]  = \frac{e^{\frac{2}{\beta\xi}}}{\ln(4)} E_1\left(\frac{2}{\beta\xi}\right),\\
	&\mathbb{E}[ C_2^\text{O} ] = \frac{e^{\frac{1}{\beta\xi}}}{\ln(2)}E_1\left(\frac{1}{\beta\xi}\right) - \frac{e^{\frac{2}{\beta\xi}}}{\ln(4)} E_1\left(\frac{2}{\beta\xi}\right),\\
	&\mathbb{E}[S_{\text{O}} ] = \frac{e^{\frac{1}{\beta\xi}}}{\ln(2)}E_1\left(\frac{1}{\beta\xi}\right)
\end{align}
where $E_1(x) =\int_x^\infty u^{-1}e^{-u} du$ is the well-known exponential integral. Note that $\mathbb{E}[ C_1^\text{O} ]=\mathbb{E}[ C_1^\text{N}(a_\text{sup}) ]$ and $\mathbb{E}[ C_2^\text{O} ] = \mathbb{E}[ C_2^\text{N}(a_\text{inf}) ]$.




It is also shown \cite{FairNOMAInfocom2016} that
\begin{align}
\label{eq:C1ainf}  &\mathbb{E}\left[ C_1^\text{N}(a_\text{inf}) \right] = \frac{3e^{\frac{2}{\beta\xi}}}{\ln(4)}E_1\left(\frac{2}{\beta\xi}\right) \\
	&- \int_0^\infty \frac{2}{\beta\ln(2)} \cdot \exp\left(-\frac{x}{\beta}\left( \frac{\sqrt{1+\xi x}-2}{\sqrt{1+\xi x}-1}\right)\right)\nonumber\\
    & \times \left( E_1\left( \frac{x}{\beta(\sqrt{1+\xi x}-1)} \right) - E_1\left(\frac{x\sqrt{1+\xi x}}{\beta(\sqrt{1+\xi x}-1)} \right) \right)  dx, \nonumber
\end{align}
and
\begin{align}\label{eq:C2asup}
	&\mathbb{E}[ C_2^\text{N}(a_\text{sup}) ]  = \frac{e^{\frac{2}{\beta\xi}}}{\ln(4)} E_1\left(\frac{2}{\beta\xi}\right) + \int_0^\infty \frac{2}{\beta\ln(2)}  \\ 
	 &\times \exp\left(-\frac{x}{\beta}\left(\frac{\sqrt{1+\xi x} -2}{\sqrt{1+\xi x} -1}\right)\right) E_1\left( \frac{x\sqrt{1+\xi x}}{\beta(\sqrt{1+\xi x} - 1)} \right)dx.\nonumber
\end{align}

At high SNR ($\xi\gg 1$), the approximate capacities are
\begin{align}
 	& C_i^\text{O} \approx  \frac{1}{2}\log_2(\xi|h_i|^2), \\
	& C_1^\text{N}(a_\text{inf}) \approx  \frac{1}{2}\log_2(\xi|h_2|^2), \\
	& C_2^\text{N}(a_\text{sup})\approx \log_2\left(\sqrt{\frac{\xi}{|h_1|^2}}|h_2|^2\right).
\end{align}
This implies that when $\xi\gg 1$, $C_1^\text{N}(a_\text{sup}) \approx C_2^\text{O}$.  
The high SNR approximations lead to following result for the difference in the expected capacity gains, i.e.,  $\Delta S(a) = S_\text{N}(a)-S_\text{O}$. 

\begin{theorem}\label{thm:DiffCap1}
    In a two-user SISO system with $|h_i|^2\sim\mathrm{Exponential}(\frac{1}{\beta})$ and at high SNR regime, the increase in sum capacity is $\mathbb{E}[\Delta S(a)]\approx 1$ bps/Hz, $\forall a\in\mathcal{A}_\text{FN}$. 
\end{theorem}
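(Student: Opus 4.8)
The plan is to leverage the monotonicity of $S_\text{N}(a)$ in $a$ (established in Section~\ref{sec:fairNOMA}) to reduce the statement to the two endpoints of $\mathcal{A}_\text{FN}=[a_\text{inf},a_\text{sup}]$. For any fixed channel realization $(|h_1|^2,|h_2|^2)$ and any $a\in\mathcal{A}_\text{FN}$, monotonicity in $a$ gives $S_\text{N}(a_\text{inf})\le S_\text{N}(a)\le S_\text{N}(a_\text{sup})$, hence $\Delta S(a_\text{inf})\le\Delta S(a)\le\Delta S(a_\text{sup})$ pointwise; taking expectations yields $\mathbb{E}[\Delta S(a_\text{inf})]\le\mathbb{E}[\Delta S(a)]\le\mathbb{E}[\Delta S(a_\text{sup})]$. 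Thus it suffices to show that both extreme ergodic gains equal $1$ bps/Hz at high SNR, and the squeeze then covers every $a\in\mathcal{A}_\text{FN}$ simultaneously.

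Next I would substitute the high-SNR approximations stated just before the theorem. Because $C_1^\text{N}(a_\text{sup})=C_1^\text{O}$ and $C_2^\text{N}(a_\text{inf})=C_2^\text{O}$ hold exactly (they define $a_\text{sup}$ and $a_\text{inf}$), one has $\Delta S(a_\text{sup})=C_2^\text{N}(a_\text{sup})-C_2^\text{O}$ and $\Delta S(a_\text{inf})=C_1^\text{N}(a_\text{inf})-C_1^\text{O}$. Inserting $C_2^\text{N}(a_\text{sup})\approx\log_2\!\big(\sqrt{\xi/|h_1|^2}\,|h_2|^2\big)$, $C_1^\text{N}(a_\text{inf})\approx\tfrac12\log_2(\xi|h_2|^2)$, and $C_i^\text{O}\approx\tfrac12\log_2(\xi|h_i|^2)$, both differences collapse to the same expression
\begin{equation}
\Delta S(a_\text{inf})\ \approx\ \Delta S(a_\text{sup})\ \approx\ \tfrac12\log_2\!\left(\frac{|h_2|^2}{|h_1|^2}\right).
\end{equation}
Intuitively this is because at high SNR both $a_\text{inf}$ and $a_\text{sup}$ tend to $0$, so in either case essentially all power goes to user-2, giving $S_\text{N}\approx\log_2(\xi|h_2|^2)$ against $S_\text{O}\approx\tfrac12\log_2(\xi^2|h_1|^2|h_2|^2)$.

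It then remains to compute $\mathbb{E}\big[\log_2(|h_2|^2/|h_1|^2)\big]=\mathbb{E}[\log_2|h_2|^2]-\mathbb{E}[\log_2|h_1|^2]$ for two i.i.d.\ $\mathrm{Exponential}(1/\beta)$ gains. I would use two elementary facts about their order statistics: (i) $|h_1|^2=\min\{|h_i|^2,|h_j|^2\}$ is again exponential with rate $2/\beta$, so scaling inside the logarithm gives $\mathbb{E}[\log_2|h_1|^2]=\mathbb{E}[\log_2|h_i|^2]-1$; and (ii) $\log_2|h_1|^2+\log_2|h_2|^2=\log_2|h_i|^2+\log_2|h_j|^2$ holds pathwise (the unordered pair is unchanged), so $\mathbb{E}[\log_2|h_1|^2]+\mathbb{E}[\log_2|h_2|^2]=2\,\mathbb{E}[\log_2|h_i|^2]$. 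Combining (i) and (ii) gives $\mathbb{E}[\log_2|h_2|^2]=\mathbb{E}[\log_2|h_i|^2]+1$, so the difference is exactly $2$ — the unknown mean $\mathbb{E}[\log_2|h_i|^2]$ (which involves the Euler--Mascheroni constant) cancels. Hence $\mathbb{E}[\Delta S(a_\text{inf})]\approx\mathbb{E}[\Delta S(a_\text{sup})]\approx\tfrac12\cdot2=1$, and the squeeze from the first paragraph completes the proof.

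The main obstacle is not any individual computation but making the high-SNR ``$\approx$'' rigorous: the approximation $\log_2(1+t)\approx\log_2 t$ breaks down on the event that a channel gain is near zero, where $\Delta S$ is nonetheless integrable, so one must show that the contribution of that shrinking event to the expectation vanishes as $\xi\to\infty$ — for instance by splitting the integration region and bounding the near-zero part, or by passing to the limit directly in the exact closed-form expressions for $\mathbb{E}[C_1^\text{N}(a_\text{inf})]$ and $\mathbb{E}[C_2^\text{N}(a_\text{sup})]$ together with the OMA formulas, using $E_1(x)\sim-\ln x$ as $x\to0^{+}$. Everything else reduces to the exponential order-statistics calculation above.
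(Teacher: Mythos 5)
Your proposal is correct and follows essentially the same route as the paper's proof: the same squeeze via monotonicity of $S_\text{N}(a)$ in $a$, the same high-SNR reduction of both endpoint gains to $\tfrac{1}{2}\log_2\left(|h_2|^2/|h_1|^2\right)$, with the final expectation evaluated through the order-statistics scaling identity ($\min \stackrel{d}{=} |h_i|^2/2$ plus the pathwise product identity) instead of the paper's direct double integration, which exploits the same scaling via the substitution $x=2x_2$. One aside is backwards and worth fixing, though it does not affect the argument: since $a$ is user-2's coefficient and $a_\text{inf},a_\text{sup}\to 0$ at high SNR, essentially all power goes to user-1 (the weaker user), while user-2's rate still grows because $a\xi\to\infty$.
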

\begin{IEEEproof}
    See appendix \ref{proof:DiffCap1}. 
\end{IEEEproof}
This interesting result means that when the transmit power approaches infinity, the average increase in sum capacity is  the same for both $a_\text{inf}$ and $a_\text{sup}$ and is equal to $1\text{ bps/Hz}$. Equivalently, it means that $\forall a\in\mathcal{A}_\text{FN}$, both users experience an expected increase in capacity over OMA of $c$ and $1-c$ where $c\in[0,1]$.

\subsection{Comparison of Theoretical and Simulations Results}

Fair-NOMA theoretical results are compared with simulation when  $\beta=1$. 
In figure \ref{fig:oma}, the capacity of NOMA is compared with that of OMA for both users, including the high SNR approximations. As can be seen, the theoretical derivations match the simulation results. The performances of $C_1^\text{N}(a_\text{inf})$ and $C_2^\text{N}(a_\text{sup})$ are plotted.
The simulation of $\mathbb{E}[C_1^\text{N}(a_\text{inf})]$ matches the theoretical result in equation (\ref{eq:C1ainf}), and the simulation of $\mathbb{E}[C_2^\text{N}(a_\text{sup})]$ matches the theoretical result in equation (\ref{eq:C2asup}). The high SNR approximations show to be very close for values of $\xi>25$ dB.
Since $C_2^\text{O}=C_2^\text{N}(a_\text{inf})$ and $C_1^\text{O}=C_1^\text{N}(a_\text{sup})$, it is apparent from the plots that the gain in performance is always approximately $1\text{ bps/Hz}$ for one of the users and also the sum capacity when using Fair-NOMA \cite{FairNOMAInfocom2016}. 
\begin{figure}[http]
\vspace{-5mm}

    \centering
      \includegraphics[width=\columnwidth]{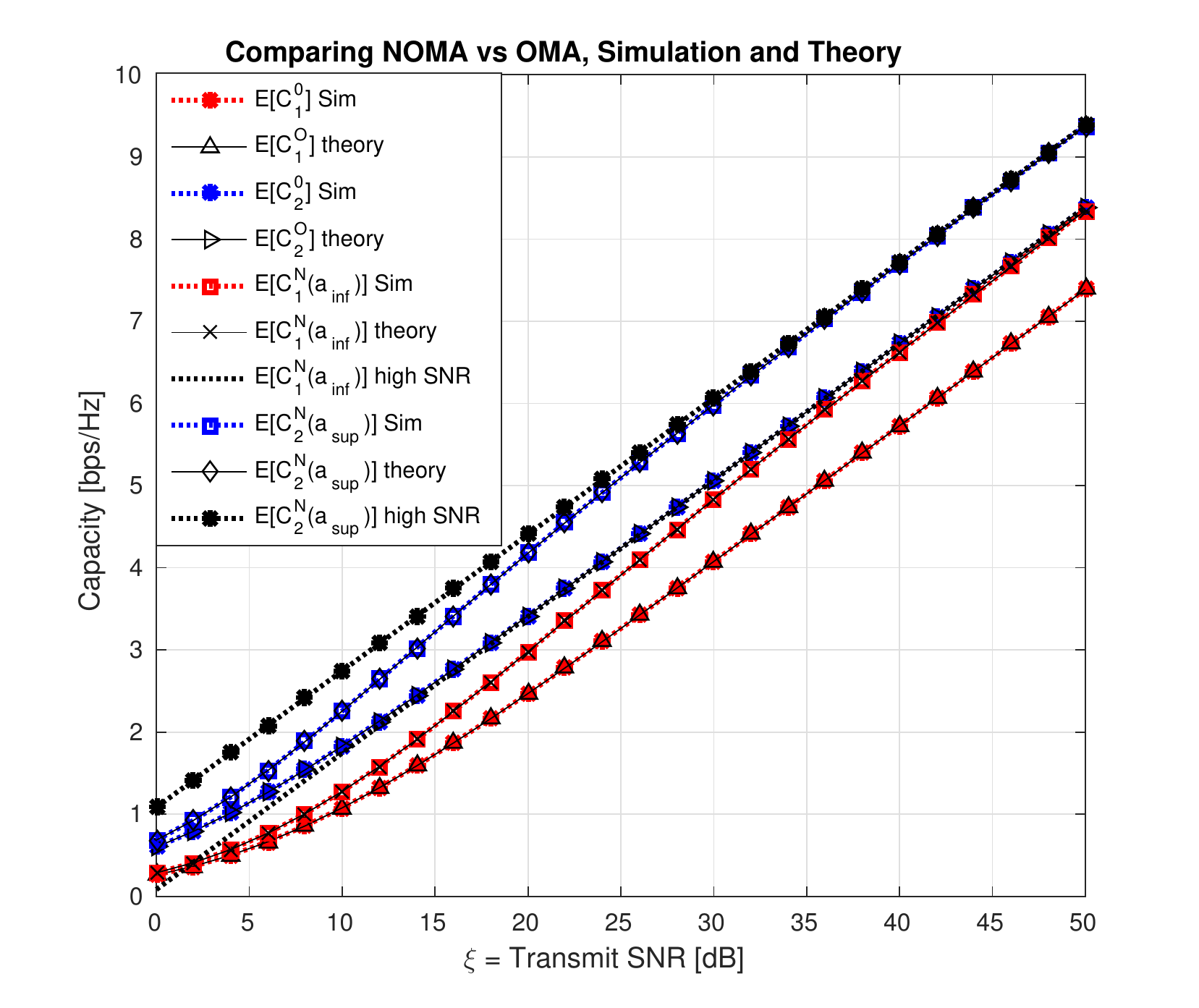}
\vspace{-5mm}

\caption{\label{fig:oma}Comparing the capacity of NOMA and OMA}
\end{figure}

\section{Outage Probability of Fair-NOMA}\label{sec:outage}

Suppose that the minimum rate that is allowed by the system to transmit a signal is $R_0$. The probability that a user cannot achieve this rate with any coding scheme is given by $\mathrm{Pr}\{C < R_0\}$. As with the average capacity analysis, the outage performance of NOMA is analyzed by looking at $a_\text{inf}$ and $a_\text{sup}$, and then draw logical conclusions from that. 

The outage probability of user-1 using  OMA is given by
\begin{align}
	p_{\text{O},1}^\text{out}&=\mathrm{Pr}\left\{\log_2(1+\xi|h_1|^2)^{1/2} < R_0\right\}\\
	 &= \int_0^{\frac{4^{R_0}-1}{\xi}}\int_{x_1}^\infty \frac{2}{\beta^2}e^{-\frac{x_1+x_2}{\beta}}dx_2dx_1 \\
	&= 1-\exp\left(-\tfrac{2(4^{R_0}-1)}{\beta\xi}\right)\nonumber.
\end{align}
For user-2 using OMA, the outage probability is given by
\begin{align}
	p_{\text{O},2}^\text{out}=&\mathrm{Pr}\left\{\log_2(1+\xi|h_2|^2)^{1/2} < R_0\right\}\\
	=& \int_0^{\frac{4^{R_0}-1}{\xi}}\int_{x_1}^{\frac{4^{R_0}-1}{\xi}} \frac{2}{\beta^2}e^{-\frac{x_1+x_2}{\beta}}dx_2dx_1, \\
	=& 1 + \exp\left(-\tfrac{2(4^{R_0}-1)}{\beta\xi}\right) - 2\exp\left(-\tfrac{4^{R_0}-1}{\beta\xi}\right).\nonumber
\end{align} 

Denote the NOMA outage probability for user-$i$ as $p_{\text{N},i}^\text{out}(a)$ such that 
	$p_{\text{N},i}^\text{out}(a) = \mathrm{Pr}\{C_i^\text{N}(a) < R_0 \}$ for $i=1,2$.
It should be obvious that $p_{\text{N},1}^\text{out}(a_\text{sup})=p_{\text{O},1}^\text{out}$ and $p_{\text{N},2}^\text{out}(a_\text{inf})=p_{\text{O},2}^\text{out}$. The outage probabilities $p_{\text{N},1}^\text{out}(a_\text{inf})$ and $p_{\text{N},2}^\text{out}(a_\text{sup})$ are provided in the following property.
\begin{property} \label{prop:outage} Outage Probabilities $p_{\text{N},1}^\text{out}(a_\text{inf})$ and $p_{\text{N},2}^\text{out}(a_\text{sup})$:
\begin{enumerate}[(a)]
	\item \label{item:outage1}
	The outage probability for user-1 at $a=a_\text{inf}$ is given by 
	\begin{align}
		\label{eq:NOMAoutMU1} p_{\text{N},1}^\text{out}(a_\text{inf}) = 1 + e^{-\frac{\alpha_2}{\beta}} - \frac{2}{\beta}\int_{\alpha_2}^\infty e^{-\frac{x(\alpha_1+1)}{\beta}}dx,
	\end{align} 
	where $\alpha_1$ and $\alpha_2$ are defined as 
	\begin{align*}
		 &\alpha_1= \tfrac{2^{R_0}-1}{\xi x + 2^{R_0}(1-\sqrt{1+\xi x}) }, \\
		 &\alpha_2= \tfrac{4^{R_0}-2}{2\xi} + \sqrt{ \tfrac{4^{R_0}-1}{\xi^2}+\tfrac{(4^{R_0}-2)^2}{4\xi^2} }.
	\end{align*}
	\item \label{item:outage2} 
The outage probability for user-2 at $a=a_\text{sup}$ is given by   
\end{enumerate}
	\begin{align}
	\hspace{-0.2in}	p_{\text{N},2}^\text{out}(a_\text{sup}) =  1 + e^{-\frac{2(4^{R_0}-1)}{\beta\xi}} - 2e^{-\frac{2(2^{R_0}-1)}{\beta\xi}} + (2^{R_0}-1)\nonumber\\
	e^{\frac{(2^{R_0}-3)^2}{4\beta\xi} }	\cdot\sqrt{\tfrac{\pi}{\beta\xi}}\left[\mathrm{erfc}\left(\tfrac{2^{R_0}+1}{2\sqrt{\beta\xi}}\right) - \mathrm{erfc}\left(\tfrac{3(2^{R_0})-1}{2\sqrt{\beta\xi}}\right)\right].
		\label{eq:outage2}
	\end{align} 
\end{property} 
\begin{IEEEproof}
	See appendix \ref{proof:Outage_Probabilities}.
\end{IEEEproof}
There is no closed form solution for the integral in $p_{\text{N},1}^\text{out}(a_\text{inf})$, however it can be easily computed by a computer.


Figure \ref{fig:nomaout} plots the outage probabilites of OMA and NOMA for different values of $a$ and for $R_0=2$ bps/Hz. The probability of user-1 experiencing an outage is clearly greater than for user-2. However, the reduction of the outage probability for user-1 using $a=a_\text{inf}$ becomes significant as $\xi$ increases, to the effect of nearly 1 order of magnitude drop-off when $\xi$ is really large. The outage probability reduction for user-2 is not as significant as the improvement made by user-1. However,  when $a=a_\text{sup}$, the same outage probability can be obtained using NOMA with  $\xi$ approximately 2 dB less than is required when using OMA. Thus, even when the power allocation coefficient $a$ is restricted to being in $\mathcal{A}_\text{FN}$, the probability of users to be able to achieve their minimum service requirement rates $R_0$ is improved, and especially improved for the weaker channel  gain. Even when the power allocation coefficient $a=(a_\text{inf}+a_\text{sup})/2$, the outage probabilities of both users improves significantly when using NOMA. 
\begin{figure}
	\centering
	\includegraphics[width=\columnwidth]{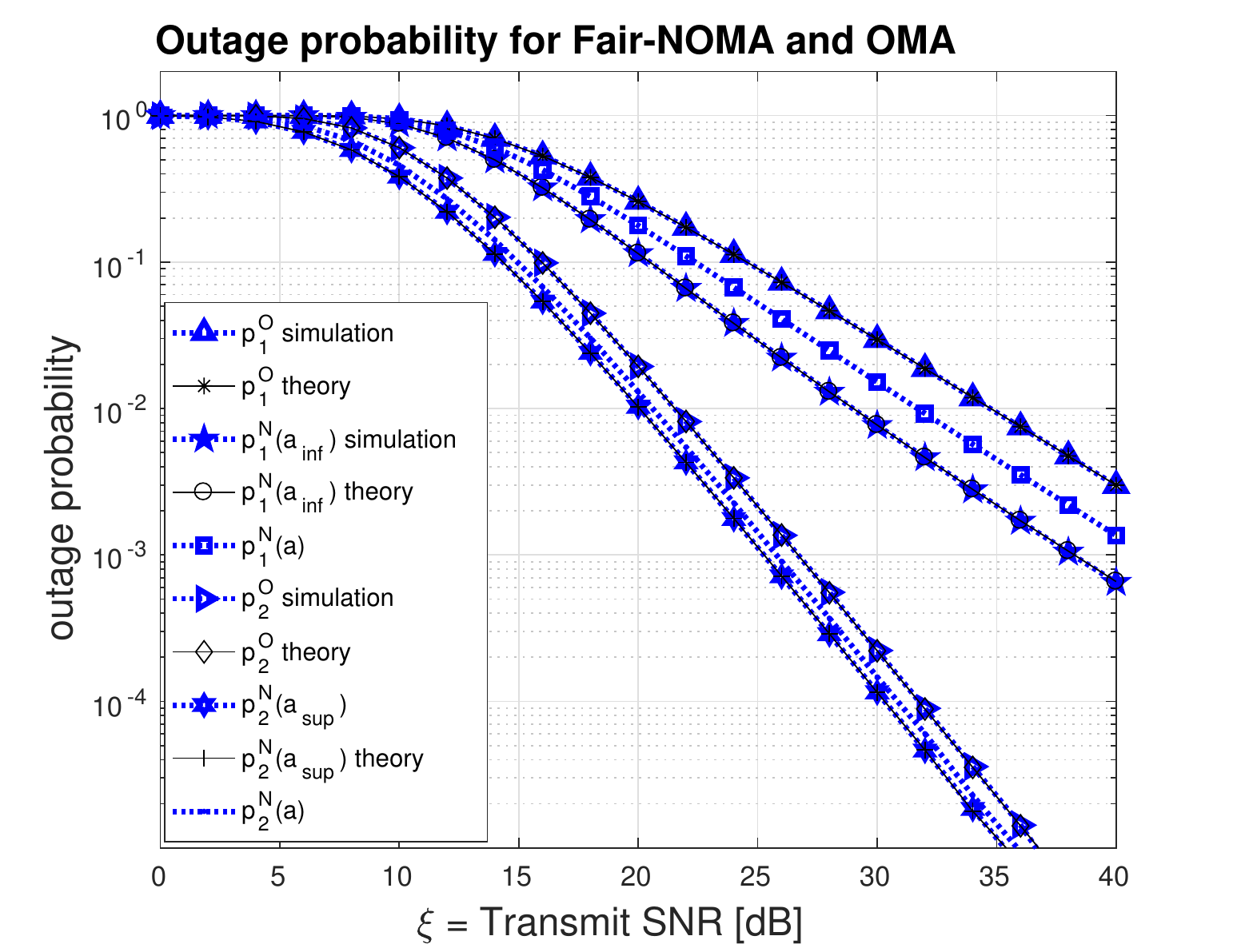}
\vspace{-5mm}
	
	\caption{\label{fig:nomaout}Outage probabilities of NOMA and OMA as functions of $\xi$.}
\end{figure}



\vspace{-1mm}
\section{Fair-NOMA in Opportunistic User-Pairing}\label{sec:fairNOMAMUD}

It has been suggested in \cite{5GNOMA:DFP} that the best NOMA performance is obtained when user channel conditions are most disparate, i.e. pairing the user with the weakest channel condition and the user with the strongest channel condition together. However, it is not known what the expected capacity gap is in this case, particularly for the case when both users are allocated power such that they both always outperform their OMA performance. Since the power allocation scheme where NOMA outperforms OMA with probability of 1 has been defined, regardless of number of users, this approach can also be applied here. 
 
Suppose there exists a set of $K$ mobile users in a cell, and two of these users can be scheduled during the same transmission period. It is of particular interest to select the users that have the largest difference in channel SNR gain. If the channel SNR gains of the users are i.i.d. $\mathrm{Exponential}(\frac{1}{\beta})$, and the two selected users have the minimum and maximum channel SNR gains, how much of an improvement in the sum-rate capacity will be observed by using NOMA versus OMA? 

\vspace{-0.15in}
\subsection{Analysis of Fair-NOMA with Opportunistic User-Pairing}
Let $|h_0|^2 = \min(|h_1|^2, \ldots,|h_K|^2)$ and $|h_M|^2 = \max(|h_1|^2, \ldots,|h_K|^2)$. In order to compute the expected sum-rate capacity,  the joint CDF $F_{|h_0|^2,|h_M|^2}(x_0, x_M)$ and PDF of $f_{|h_0|^2,|h_M|^2}(x_0,x_M)$ are needed. It is easily shown that
\begin{align}
	&\mathrm{Pr}\{|h_M|^2<x_M\} = \mathrm{Pr}\{|h_0|^2<x_0, |h_M|^2<x_M \} \nonumber\\
			& + \mathrm{Pr}\{|h_0|^2>x_0, |h_M|^2<x_M \},\\
	\Rightarrow &F_{|h_0|^2,|h_M|^2}(x_0, x_M) = \mathrm{Pr}\{|h_0|^2<x_0, |h_M|^2<x_M \} \\
			\label{eq:JointProb2}&= \mathrm{Pr}\{|h_M|^2<x_M\} - \mathrm{Pr}\{|h_0|^2>x_0, |h_M|^2<x_M \}.
\end{align}
The first term on the right in equation (\ref{eq:JointProb2}) is the CDF of the maximum of $K$ i.i.d. exponential random variables, which is given by 
\begin{equation}
	\mathrm{Pr}\{|h_M|^2<x_M\} = (1-e^{-\frac{x_M}{\beta}})^K.
\end{equation}
The second term can be easily computed.
\begin{align*}
	\mathrm{Pr}\{|h_0|^2>x_0, |h_M|^2<x_M \} =& \int_{x_0}^{x_M}\hspace{-2mm}\cdots\int_{x_0}^{x_M} \prod_{k=1}^{K}\frac{e^{-\frac{x_k}{\beta}}}{\beta}dx_k \\
	=& (e^{-\frac{x_0}{\beta}} - e^{-\frac{x_M}{\beta}})^K
\end{align*}
Therefore, the joint CDF is given by
\begin{align}
	F_{|h_0|^2,|h_M|^2}(x_0, x_M) =& (1 - e^{-\frac{x_M}{\beta}})^K-(e^{-\frac{x_0}{\beta}} - e^{-\frac{x_M}{\beta}})^K,
\end{align}
and the joint PDF is
\begin{align}
	f_{|h_0|^2,|h_M|^2}&(x_0, x_M) \\
	=&\frac{K(K-1)}{\beta^2}e^{-\frac{x_0+x_M}{\beta}}(e^{-\frac{x_0}{\beta}}-e^{-\frac{x_M}{\beta}})^{K-2}.\nonumber
\end{align}
The following theorem provides the sum-rate capacity increase of NOMA when $\xi|h_0|^2\gg 1$.
\begin{theorem}\label{thm:NOMAminmax}
	Let $\{|h_1|^2, \ldots,|h_K|^2\}$ be the i.i.d. SISO channel SNR gains of $K$ users, such that the two users selected for transmission together have the minimum and maximum channel SNR gains. When $\xi\ |h_0|^2 \gg 1$, the sum-rate capacity increase from OMA to NOMA for $a = a_\text{sup}$ is
	\begin{equation}\label{eq:NOMAMUDDiff}
		\mathbb{E}[\Delta S(a_\text{sup})] \approx \frac{1}{2}\log_2(K) + \frac{1}{2}\sum_{m=2}^K\binom{K}{m}(-1)^{m} \log_2 (m) .
	\end{equation}
\end{theorem}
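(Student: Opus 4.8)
The plan is to collapse the sum-rate gap to a single order-statistic expectation and then evaluate it in closed form. Since $a_\text{sup}$ is by construction the value at which user-1's NOMA rate equals its OMA rate, we have $C_1^\text{N}(a_\text{sup})=C_1^\text{O}$ exactly, so $\Delta S(a_\text{sup})=C_2^\text{N}(a_\text{sup})-C_2^\text{O}$ and user-1 drops out entirely. First I would substitute the high-SNR approximations already recorded in Section~\ref{sec:analysis}, i.e. $C_2^\text{O}\approx\frac12\log_2(\xi|h_M|^2)$ and $C_2^\text{N}(a_\text{sup})\approx\log_2\!\big(\sqrt{\xi/|h_0|^2}\,|h_M|^2\big)$, which are justified in this regime because the hypothesis $\xi|h_0|^2\gg1$ forces $\xi|h_M|^2\gg1$ as well (as $|h_M|^2>|h_0|^2$). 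The $\xi$-dependent terms cancel, leaving $\Delta S(a_\text{sup})\approx\frac12\log_2\!\big(|h_M|^2/|h_0|^2\big)$, so $\mathbb{E}[\Delta S(a_\text{sup})]\approx\frac12\big(\mathbb{E}[\log_2|h_M|^2]-\mathbb{E}[\log_2|h_0|^2]\big)$; note that the joint PDF derived above is not actually needed, since the expression separates into the two marginals.

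Next I would evaluate each marginal expectation using the basic logarithmic-exponential integral $\int_0^\infty(\ln x)e^{-\mu x}\,dx=-(\gamma+\ln\mu)/\mu$, where $\gamma$ is the Euler--Mascheroni constant. For the minimum, $|h_0|^2$ is the minimum of $K$ i.i.d.\ $\mathrm{Exponential}(1/\beta)$ variables, hence $\mathrm{Exponential}(K/\beta)$, giving $\mathbb{E}[\ln|h_0|^2]=-\gamma-\ln K+\ln\beta$. For the maximum I would start from $f_{|h_M|^2}(x)=\frac{K}{\beta}e^{-x/\beta}(1-e^{-x/\beta})^{K-1}$, expand $(1-e^{-x/\beta})^{K-1}$ by the binomial theorem, and integrate term by term with $\mu=(j+1)/\beta$. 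The identity $\binom{K-1}{j}\frac{K}{j+1}=\binom{K}{j+1}$ lets me re-index by $m=j+1$, and then $\sum_{m=0}^K\binom{K}{m}(-1)^m=0$ (so $\sum_{m=1}^K\binom{K}{m}(-1)^m=-1$) collapses the constant part, yielding $\mathbb{E}[\ln|h_M|^2]=\ln\beta-\gamma+\sum_{m=2}^K\binom{K}{m}(-1)^m\ln m$ (the $m=1$ term vanishes since $\ln1=0$).

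Finally I would subtract, so that the $\gamma$ and $\ln\beta$ contributions cancel and $\mathbb{E}[\ln|h_M|^2]-\mathbb{E}[\ln|h_0|^2]=\ln K+\sum_{m=2}^K\binom{K}{m}(-1)^m\ln m$, and divide by $2\ln2$ to recover (\ref{eq:NOMAMUDDiff}). I expect the main obstacle to be the evaluation of $\mathbb{E}[\ln|h_M|^2]$ --- the binomial expansion, the Gamma-type logarithmic integral, and especially the re-indexing step that turns $\binom{K-1}{j}\frac{K}{j+1}$ into $\binom{K}{m}$ and makes the extra constants telescope away --- together with the (mild) point that taking expectations of the high-SNR approximation is legitimate precisely because the approximation error is negligible on the event $\xi|h_0|^2\gg1$ assumed in the statement. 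Everything else is bookkeeping.
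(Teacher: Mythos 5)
Your proposal is correct and follows essentially the same route as the paper: use $C_1^\text{N}(a_\text{sup})=C_1^\text{O}$ and the high-SNR approximations to reduce the gap to $\Delta S(a_\text{sup})\approx\frac{1}{2}\log_2\left(|h_M|^2/|h_0|^2\right)$, then evaluate the expectation through the order statistics of $K$ i.i.d.\ exponentials, a binomial expansion of the maximum's density, and the re-indexing identity $\binom{K-1}{j}\frac{K}{j+1}=\binom{K}{j+1}$. The only difference is bookkeeping: the paper keeps everything under a single integral so the $\log_2 x$ contributions cancel (via $\sum_{m=1}^K\binom{K}{m}(-1)^m=-1$) before integrating, whereas you evaluate each marginal expectation in closed form with $\int_0^\infty(\ln x)e^{-\mu x}\,dx=-(\gamma+\ln\mu)/\mu$ and cancel $\gamma$ and $\ln\beta$ in the final subtraction --- both are valid and yield (\ref{eq:NOMAMUDDiff}).
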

\begin{IEEEproof}
	See appendix \ref{proof:NOMAminmax}.
\end{IEEEproof}
\begin{remark}
	This result is similar to the result obtained for the 2-by-2 MIMO case in Lemma  2, equation 33 in \cite{MIMONOMA:DAP}, except a fixed-power allocation approach was used there, whereas the result above uses a Fair-NOMA power allocation approach. Although the expected capacity gap $\mathbb{E}[\Delta S(a)]$ increases when selecting $a=a_\text{sup}$, caution should be used when utilizing the fixed-power approach to not set $a$ too close to the value of $a_\text{inf}$. An approximation of the capacity gap using $\mathbb{E}[\Delta S(a_\text{inf})]$ for large $\xi$ and $K$ is given as 
	\begin{align*}
		&\mathbb{E}[\Delta S(a_\text{inf})]\approx \frac{e^{\frac{K}{\beta\xi}}}{\ln(4)}E_1\left(\frac{K}{\beta\xi}\right)\\ 
		&- \log_2\left(1+\frac{\sqrt{1+\xi(\psi(K+1)+\gamma)}-1)}{K(\psi(K+1)+\gamma)}\right).
	\end{align*}
	where $\psi(w)=\Gamma'(w)/\Gamma(w)$ is the digamma function, $\Gamma(w)=\int_0^\infty u^{w-1}e^{-u}du$ is the gamma function, and $\gamma=-\int_0^\infty e^{-u}\ln(u)du$ is the Euler-Mascheroni constant. It can be seen in figure \ref{fig:NOMAmud_deltaS} that as the number of users increases, the expected capacity gap actually decreases. Therefore, even for fixed-power allocation approaches to NOMA, $a$ should be selected to be greater than $a_\text{inf}$ for the case of pairing minimum and maximum channel gain users. 
\end{remark}
This result shows that the sum-rate capacity difference increases as a function of $K$. However, this increase is slow. Nonetheless, there is a fundamental limit to the amount the capacity can increase when using Fair-NOMA, while maintaining the capacity of the weaker user equal to the capacity using OMA. 

It is important to note that  as the number of mobile users becomes very large, while pairing the strongest and weakest users together will give us the greatest increase in sum-rate capacity, it does not maximize sum-rate capacity itself. This can be seen from theorem \ref{thm:thm1}, which states that the sum-rate capacity actually increases as the channel gain of the weaker user monotonically increases. A practical way of viewing this issue is that, as the number of users $K$ increases, the weakest user has channel gain that in probability is too weak to achieve the quality of service threshold rate $R_0$.Should no outage rate be specified, the weaker user achieves such a low capacity, that the stronger user contributes most of the capacity, while using nearly half the transmit power, according to Property 1 from \cite{FairNOMAInfocom2016}. Hence, a little more than half of the transmit power is nearly wasted.

\subsection{Comparing Simulation Results with Analysis}

For the simulation results, the performance of Fair-NOMA  combined with opportunistic user-pairing is compared to the performance of  OMA and  fixed-power NOMA. The simulations are run for different values of $\xi$ and $K$. For fixed-power NOMA, the power allocation coefficient is a constant value of $a=\frac{1}{5}$, such that the weaker user is  allocated $\frac{4}{5}$ of the transmit power.

Figure \ref{fig:NOMAmud_capacity} shows the average capacities of both the weakest and strongest users versus $K$  and for each case of $a=a_\text{inf}$ and $a_\text{sup}$. The capacity of the stronger user is shown to exhibit the effects of multiuser diversity, since not only does its channel gain grow as $K$ increases, but also the power allocated also increases when $a=a_\text{sup}$, thus providing the increase in capacity predicted in equation (\ref{eq:NOMAMUDDiff}). In the case of $a=a_\text{inf}$ the capacity is initially shown to increase as $K$ increases, due to $a_\text{inf}$ decreasing with $|h_M|^2$ according to Property 1 from \cite{FairNOMAInfocom2016}. However, as $K$ continues to increase, the weakest users capacity eventually begins to decrease due to its channel gain being the minimum of a large number of users, and thus this term begins to dominate the capacity behavior. 
\begin{figure}[http]
	\centering
	\includegraphics[width=\columnwidth]{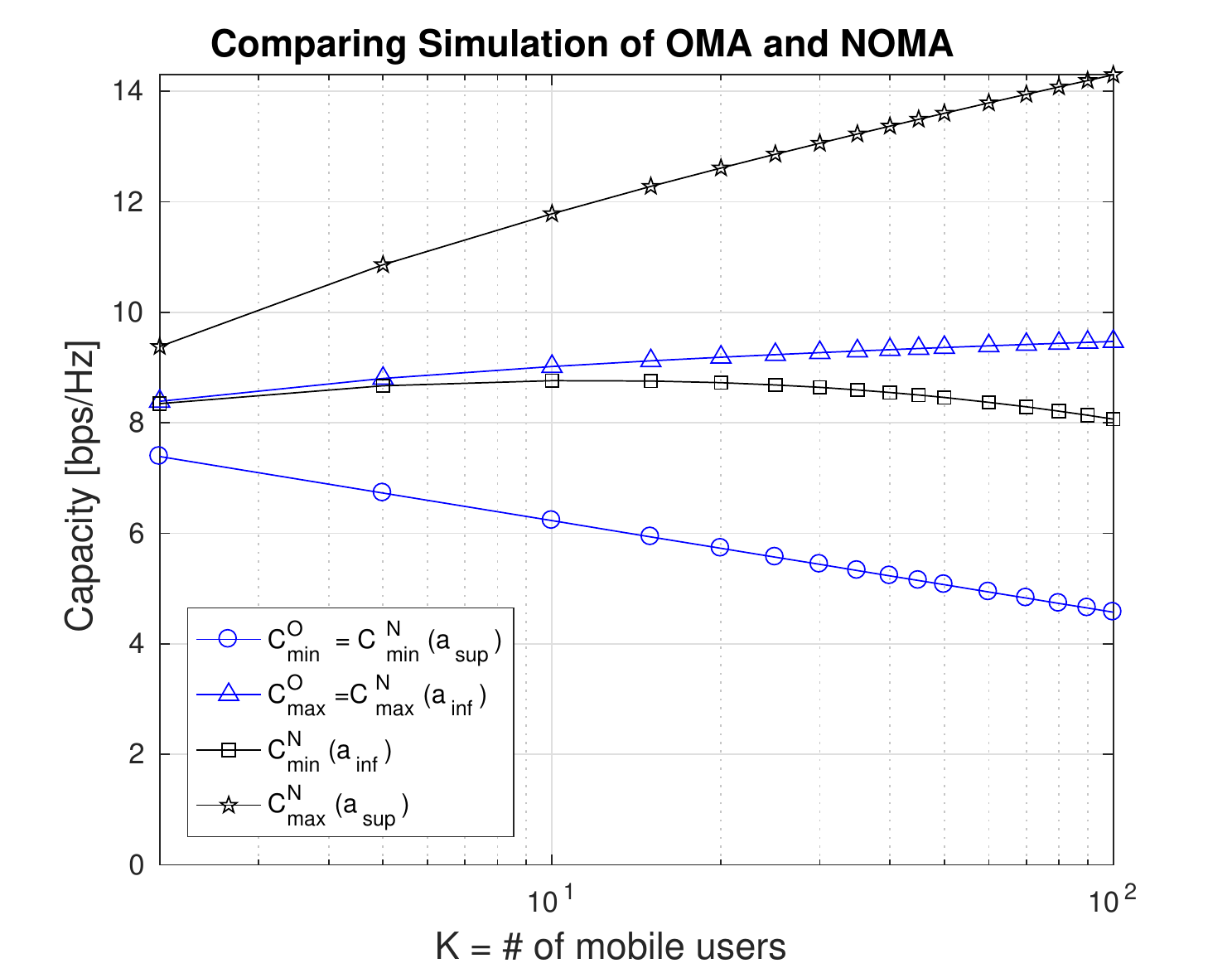}
\vspace{-3mm}
	\caption{\label{fig:NOMAmud_capacity} Ergodic capacity with opportunistic user-pairing, $\xi=50$ dB}
\end{figure}

The sum-rate capacity for Fair-NOMA with $a=a_\text{sup}$, fixed-power NOMA with $a=\frac{1}{5}$, and OMA are shown in Fig. \ref{fig:NOMAmud_sumcompare}. As expected, the sum-rate capacity for each user at lower values of $\xi$ performs best when applying Fair-NOMA when compared to fixed-power NOMA. This is because Fair-NOMA always guarantees a capacity increase, i.e. with probability 1, while fixed-power NOMA only achieves higher capacity with probability as given in \cite{5GNOMA:DFP}. However, as $\xi$ increases, both  capacities of Fair-NOMA and fixed-power NOMA approach the same value asymptotically. This agrees with the result obtained that at high SNR, the capacity gain should reach a limit when $\xi\rightarrow\infty$, no matter how much extra power is allocated to the stronger user. 
\begin{figure}[http]
	\centering
	\includegraphics[width=\columnwidth]{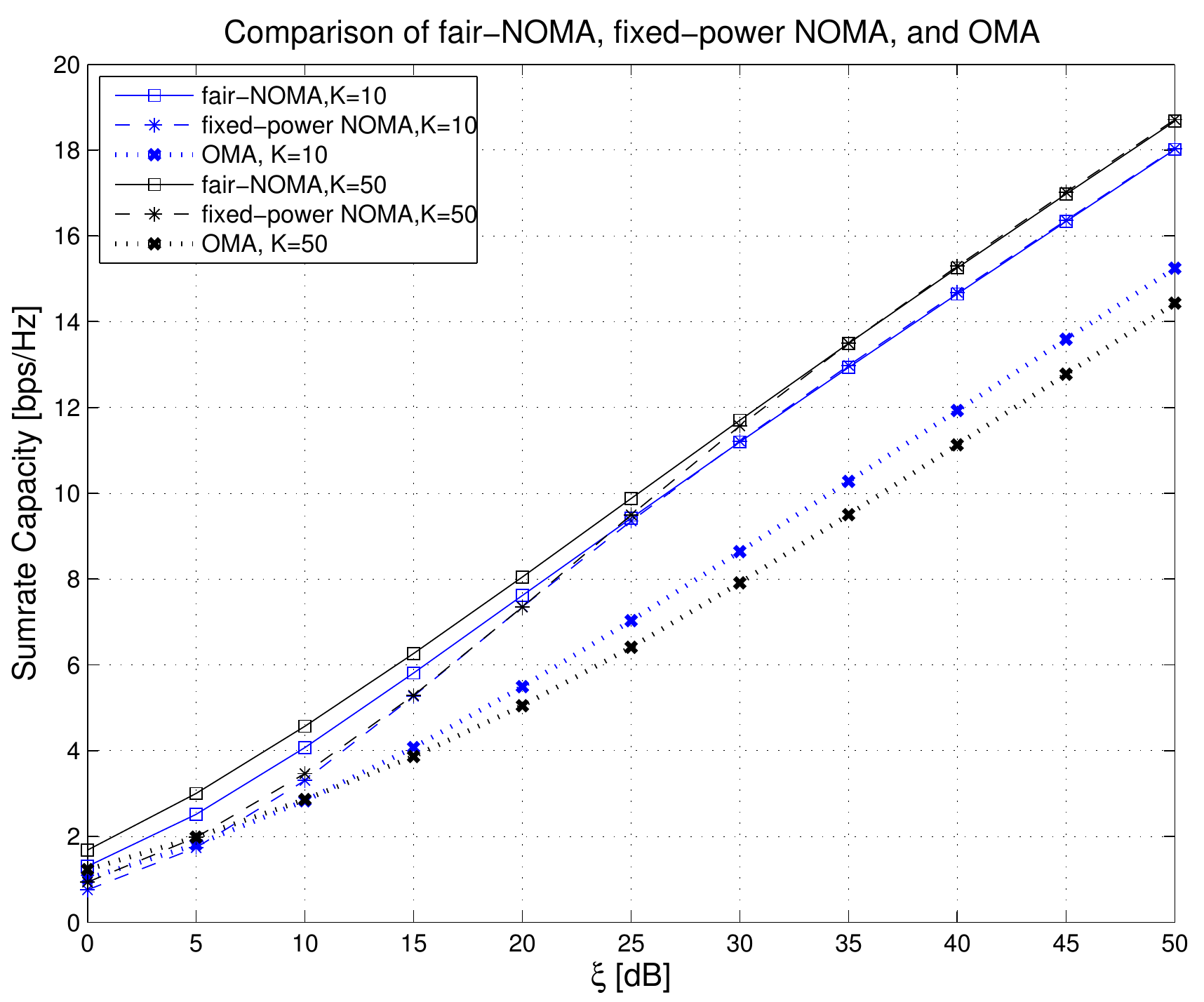}
\vspace{-3mm}
	\caption{\label{fig:NOMAmud_sumcompare} Comparison of Fair-NOMA, fixed-power NOMA, and OMA}
\end{figure}

Equation (\ref{eq:NOMAMUDDiff}) shows that the capacity gain made by pairing the nearest and furthest cell-edge users is slow in $K$, and is due to the combined gain in capacity achieved by the strongest user and loss in capacity by the weakest user. This makes sense from multiple points of view. The expected value of power allocation coefficient  $a_\text{sup}\rightarrow\frac{1}{2}$ when $K$ is large, due to the selection of the user with the weakest channel gain. In other words, as $K$ increases, the weakest user needs less power in NOMA to achieve the same capacity as it can using OMA. Hence, more power goes to the stronger user.  Figure \ref{fig:NOMAmud_deltaS} plots the simulation of $\mathbb{E}[\Delta S(a_\text{sup})]$ for $\xi=50$ dB, and the approximation given by (\ref{eq:NOMAMUDDiff}). Notice that the simulation and approximation seem to slightly diverge as the number of users increases. This is because the approximation in (\ref{eq:NOMAMUDDiff}) needs a sufficiently large value of $\xi$ as the number of users increases for the simulation and approximation to become tighter. However, $\xi=50$ dB was used because it is a large but still realistic value of $\xi$. Since the number of users $K$ cannot become arbitrarily large, the approximation remains tight for realistic values of $\xi$ and $K$. 
\begin{figure}[http]
	\centering
	\includegraphics[width=\columnwidth]{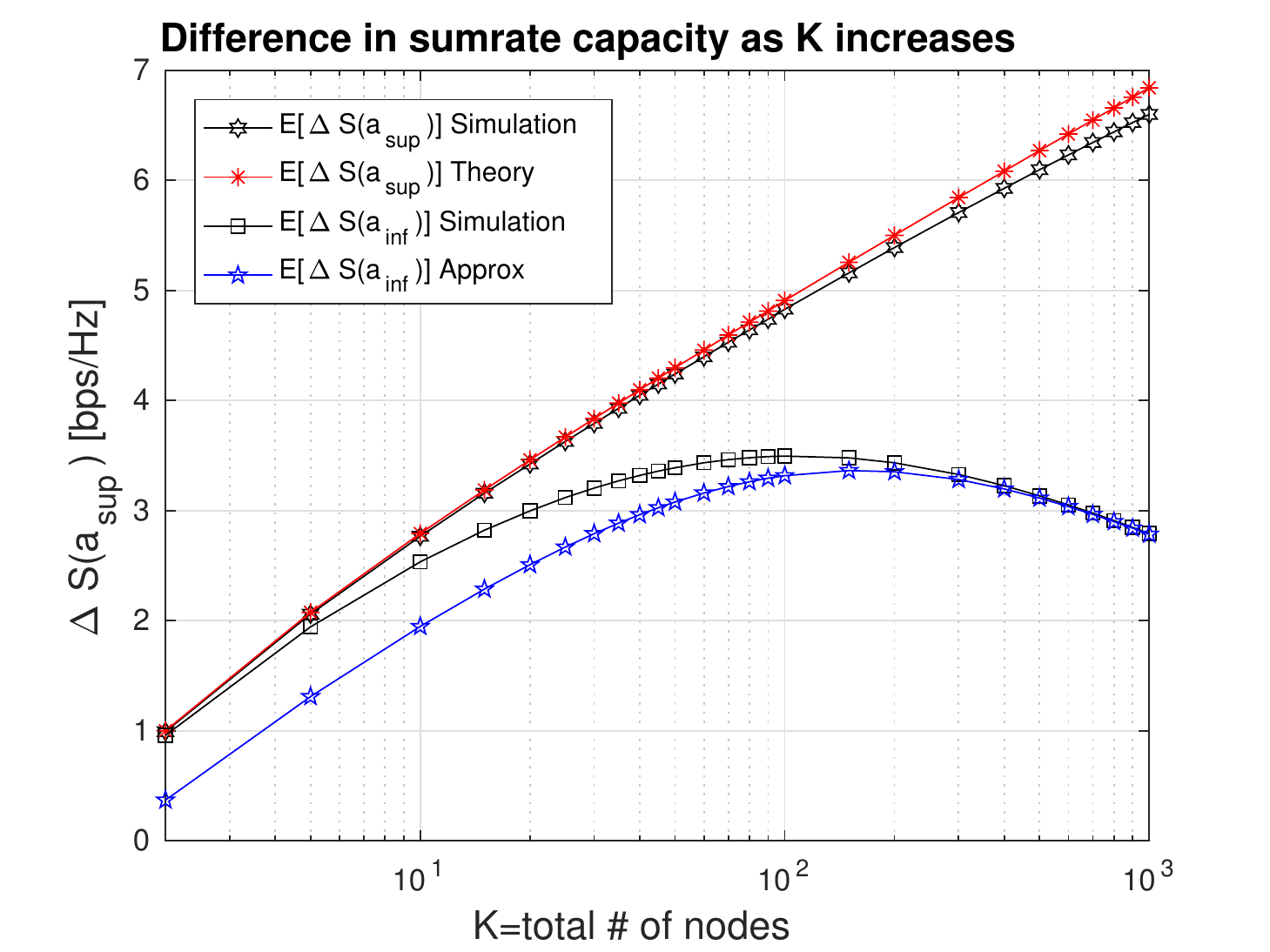}
\vspace{-3mm}

	\caption{\label{fig:NOMAmud_deltaS} Difference in ergodic capacity with opportunistic user-pairing; $\xi=50$ dB}
\end{figure}


\section{Multi-user NOMA in SISO Systems}\label{sec:MU-NOMA}
So far, the treatment of Fair-NOMA has focused on the two-user case.
Consider an OMA system, where $K$ users have their information transmitted over $K$ orthogonal time slots (or frequency bands) during a total time of $T$ (and bandwidth $B$). For each user $k$, the capacity of user $k$ is given by
\begin{equation}
	C_k^\text{O} =  \frac{1}{K}\log_2(1 + \xi|h_k|^2), \forall k=1,\ldots, K.
\end{equation}
When applying NOMA to this system, the information of each user occupies the entire time $T$ (bandwidth $B$) simultaneously. Hence, a superposition coding strategy must be used, in which all $K$ users must share the total transmit power $\xi$. User $k$ must perform SIC of each message that is intended for the other users $l$ that have weaker channel conditions than user $k$. The channel gains are ordered as $|h_1|^2<|h_2|^2<\cdots<|h_K|^2$. Lets define the power allocation coefficients $\{b_1, \ldots, b_K\}$, where $b_k$ is the power allocation coefficient for user $k$ and 
\begin{equation}\label{eq:munomapower}
	\sum_{k=1}^K b_k \leq 1.
\end{equation}
Therefore, the capacity of user $k$ for $1 \leq k \leq K$ is given by 
\begin{equation}
	C_k^\text{N}(b_1, \ldots, b_K) = \log_2\left(1 + \frac{b_k\xi|h_k|^2}{1 + \xi|h_k|^2\sum_{l=k+1}^K b_l } \right).
\end{equation}
 In order for $C_k^\text{N}(b_1, \ldots, b_K) > C_k^\text{O}$, the inequality must be solved for $b_k$ assuming that equation (\ref{eq:munomapower}) is true. Since user $K$ does not receive any interference power after decoding all of the other users' messages, solving for $b_K$ is straight forward.
\begin{align*}
	C_K^\text{O} \leq C_K^\text{N}(b_1, \ldots, b_K)
	\Rightarrow b_K \geq \frac{(1+\xi|h_K|^2)^\frac{1}{K}-1}{\xi|h_K|^2}.
\end{align*}
For users $k=1,\ldots, K-1$, the power allocation for each user is conditioned on $C_k^\text{O} \leq C_k^\text{N}(b_1, \ldots, b_K)$ which results in
\begin{align*}
	 b_k \geq \dfrac{[(1+\xi|h_k|^2)^\frac{1}{K}-1]\left(1+ \xi|h_k|^2\sum_{l=k+1}^K b_l \right)}{\xi|h_k|^2}.
\end{align*}
As expected, the power allocation of the users with weaker channel gains depend on the power allocation of the users with stronger channel gains. 

Notice that in the above derivation, the total power allocation was not necessarily used. 
Consider the case where $\sum_{k=1}^K a_k = 1$ and the case where user 1 capacity in OMA and NOMA are equal. Therefore, $C_1^\text{O} = C_1^\text{N}\Rightarrow$
\begin{align}
	&\Rightarrow\log_2(1+\xi|h_1|^2)^\frac{1}{K} = \log_2\left(1+\frac{a_1\xi|h_1|^2}{1 + \xi|h_1|^2\sum_{l=2}^K a_l }\right) \nonumber \\
	&\Rightarrow (1+\xi|h_1|^2)^\frac{1}{K} = \frac{1+\xi|h_1|^2}{1 + \xi|h_1|^2(1-a_1) }. 
\label{eq:powerint1}
\end{align}
Solving for $a_1$ gives
\begin{equation}\label{eq:powera1}
	a_1 = \frac{1+\xi|h_1|^2 - (1+\xi|h_1|^2)^\frac{K-1}{K}}{\xi|h_1|^2}.
\end{equation}
Note that both sides of equation (\ref{eq:powerint1}) are greater than 1 which means  $0<a_1<1, \forall \xi>0$.
Define $A_1 = 1-a_1$ as the sum of the interference coefficients to user 1. Therefore, 
\begin{equation}
	A_1 = \frac{(1+\xi|h_1|^2)^\frac{K-1}{K} -1}{\xi|h_1|^2},
\end{equation}
and $0<A_1<1$. In general, the power allocation coefficient required for the NOMA capacity of user $k$ to equal the OMA capacity of user $k$ can be derived by solving the equation
\begin{align}
	&C_k^\text{O} = C_k^\text{N}(a_1, \ldots, a_K)\nonumber\\
	\label{eq:powerboundK}\Rightarrow& (1+\xi|h_k|^2)^\frac{1}{K} = \frac{1 + A_{k-1}\xi|h_k|^2}{1 + (A_{k-1}-a_k)\xi|h_k|^2}, 
\end{align}
$\forall k\in\{2,\ldots,K\}, \xi>0$, where $A_{k-1} = 1-\sum_{l=1}^{k-1}a_l$. The following property for the set of power allocation coefficients $\{a_1,\ldots,a_K\}$ arises from solving equation (\ref{eq:powerboundK}).

\begin{property}\label{prop:multipower}
	If the set of power allocation coefficients $\{a_1,\ldots,a_K\}$ are derived from equations (\ref{eq:powerint1}) and (\ref{eq:powerboundK}), then 
	\begin{equation}
		a_k\in(0, 1), \hspace{5mm}\text{ and }\hspace{5mm}\sum_{k=1}^K a_k \leq 1.
	\end{equation}	 
\end{property}
\begin{proof}
	See appendix \ref{proof:multipower}.
\end{proof}
This is an important property, because it sets the precedent for the existence of a set of NOMA power allocation coefficients that (i) achieves at least OMA capacity for every user in the current transmission time period, and (ii) allows for at least one user to have a capacity greater than OMA capacity.

%

The power allocation coefficient $a_k$ considers interference received from users with higher channel gains to be at a maximum. However, the coefficients $b_k$ consider the minimum power allocation. 
Note that in power allocation for multiuser Fair-NOMA using $a_k$ coefficients, the allocation process begins with user having weakest channel (first user) and allocate enough power to have a capacity of at least equal to OMA capacity for the first user. Then, the process continues with the next user until all the power is allocated amongst all users, i.e., the last user with strongest channel receives the remaining power allocation that results in higher capacity than OMA for that user. When $b_k$ coefficients are used for power allocations, the power allocation process begins with the user with strongest channel, $K^{th}$ user and assign enough power to achieve the same capacity as OMA for that user. The process then continues with the next user until the process reaches the first user. 
%
%
%
%
%
%
Therefore, it is clear that 
\begin{align} 
	&b_k < a_k\\
	\text{and } & C_k^\text{N}(b_1,\ldots, b_K) < C_k^\text{N}(a_1,\ldots, a_K), \forall k.
\end{align} 
Hence, property \ref{prop:multipower} highlights that there always exists a power allocation scheme in the general multiuser NOMA case that always achieves higher capacity than OMA, while keeping the total transmit power to $\xi$. This minimum power allocation requirement is demonstrated in figure \ref{fig:totalpower}. The most interesting aspect of this result is that the same capacity of OMA can be achieved using Fair-NOMA with potentially much less total transmit power by using $b_k$ coefficients. This can be useful if the purpose of NOMA is to minimize the total transmit power in the network. 
\begin{figure}
\vspace{-5mm}
	\centering
	\includegraphics[width=\columnwidth]{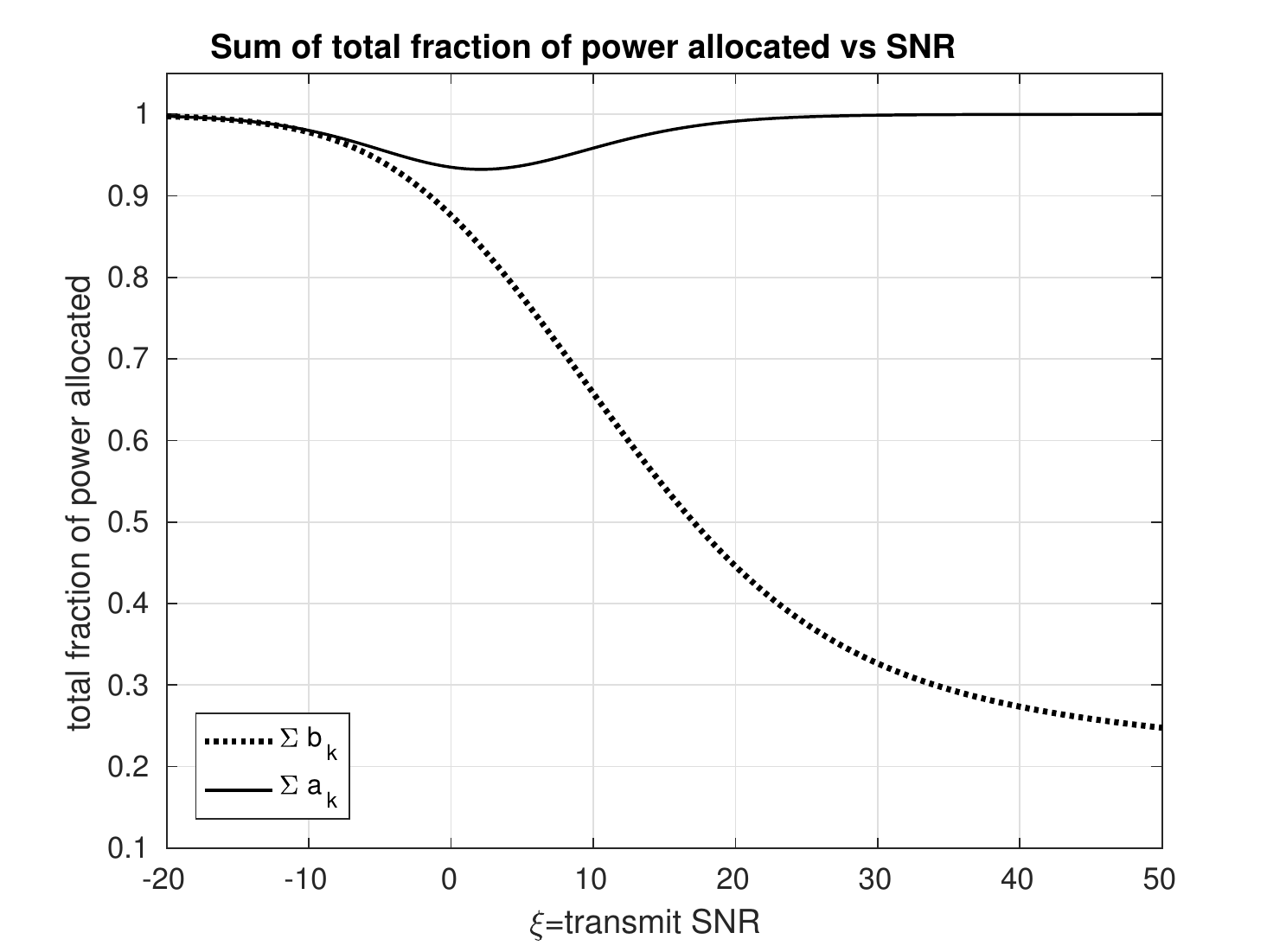}
\vspace{-3mm}
	\caption{\label{fig:totalpower} Minimum total power allocation in NOMA required to achieve capacity equal to OMA per user, $K=5$}
\vspace{0mm}
\end{figure}


\section{Conclusion and Future Work}\label{sec:conclusion}
Fair-NOMA approach is introduced which allows two paired users to achieve capacity greater than or equal to the capacity with OMA. Given the power allocation set $\mathcal{A}_\text{FN}$  for this scheme, the ergodic capacity for the infimum and supremum of this set is derived for each user, and the expected asymptotic capacity gain is found to be 1 bps/Hz. The outage probability was also derived  and it is shown that when $a=a_\text{inf}$, the outage performance of the weaker user significantly improves over OMA, where as the outage performance of the stronger user improves by at most roughly 2dB. 

Fair-NOMA is applied to opportunistic user-pairing  and the exact  capacity gain  is computed. The performance of Fair-NOMA is compared with a fixed-power NOMA approach to show that even when the power allocation coefficient $a=a_\text{sup}$ becomes less than the fixed-power allocation coefficient, the capacity gain is the same at high SNR, while Fair-NOMA clearly outperforms the fixed-power approach at low SNR.

The concept of  Fair-NOMA can be extended to  MIMO systems.  In \cite{MIMONOMA:DAP}, a similar result is found for the approximate expected capacity gap of a 2-user 2-by-2 MIMO NOMA system. In order to eliminate the existing possibility that NOMA does not outperform OMA in capacity for any user, the Fair-NOMA approach can be applied to users that are utilizing the same degree of freedom from the base-station. By ordering the composite channel gains, which include the transmit and receive beamforming applied to the channel, $K$ users on the same transmit beam can have their signals superpositioned, and then SIC can be done at their receivers to obtain their own signal with minimum interference. Receive beamforming is used to eliminate the interference from the transmit beams' signals. The power allocation region can then be derived in the same manner as in Section \ref{sec:MU-NOMA}, and NOMA can then be used to either increase the capacity gap as is done in \cite{MIMONOMA:DAP}, or to minimize the transmit power required to achieve the same capacity as in OMA, similar to what was done in Section \ref{sec:MU-NOMA}.

Finally, it is necessary to demonstrate a full system analysis and simulation of the impact of NOMA on bit-error rate (BER). It has been shown that the BER is very tightly approximated by the outage probability in \cite{MUDDIV:ZT}, and therefore a tight approximation of the BER performance is given in this work. However, the transmission of signals using superposition coding and different information rates for the users are factors that impact the analysis of BER for the signal of the weaker user, especially because constellation sizes will most likely be different for the two superpositioned signals.


\appendices
\renewcommand{\theequation}{\thesection.\arabic{equation}}
\section{Proof of Theorem \ref{thm:thm1}}\label{proof:thm1}
\begin{IEEEproof}
	For the case when $a = a_\text{inf}$, the proof is trivial. Proving for the case when $a=a_\text{sup}$ then suffices to show it is true for all $a\in\mathcal{A}_\text{FN}$, because the $C_1^\text{N}(a)$ performance is lower-bounded by the case when $a=a_\text{sup}$, while for $C_2^\text{N}(a)$ the performance will only improve for $a>a_\text{inf}$. 
	In order for $S$ to be monotonically increasing function of $|h_i|^2$, it must be shown that $dS/d|h_i|^2 > 0$, $\forall |h_i|^2$. 
	In the case of $|h_2|^2$, $C_1^\text{N}(a_\text{sup})$ does not factor in, so
		$\frac{dS}{d|h_2|^2} = \frac{dC_2^\text{N}(a_\text{sup}) }{d|h_2|^2} = \frac{a_\text{sup}\xi}{a_\text{sup}\xi|h_2|^2} > 0,$
	$\forall |h_2|^2$. 
	The case of $|h_1|^2$ goes as follows.
	\begin{align*}
		 &\frac{dS}{d|h_1|^2} 
		 =\frac{1}{\ln 2}\left[\frac{\xi}{2(1+\xi|h_1|^2)}\right.\\
		  & + \left. \frac{\frac{\xi|h_2|^2}{2|h_1|^2\sqrt{1+\xi|h_1|^2}} + \frac{|h_2|^2}{|h_1|^4}(1-\sqrt{1+\xi|h_1|^2}  )}{1+\frac{|h_2|^2}{|h_1|^2}(\sqrt{1+\xi|h_1|^2} - 1)}\right]\\
		 =& \frac{\left\{\begin{array}{l}\xi(|h_1|^4+|h_1|^2|h_2|^2[(1+\xi|h_1|^2)^{\frac{1}{2}}-1]\\ + 2|h_2|^2(1+\xi|h_1|^2) + \xi|h_1|^2|h_2|^2(1+\xi|h_1|^2)^{\frac{1}{2}}\\ - 2|h_2|^2(1+\xi|h_1|^2)^{\frac{3}{2}}\end{array} \right\}  }{ 2|h_1|^2\ln(2)(1+\xi|h_1|^2)(|h_1|^2+|h_2|^2(\sqrt{1+\xi|h_1|^2}-1) )  }
	\end{align*}
	The numerator above can be simplified as
	\begin{align*}
		& \xi|h_1|^4 + |h_2|^2  \\
		 &\times[ 2\xi|h_1|^2\sqrt{1+\xi|h_1|^2} + \xi|h_1|^2 + 2 - 2(1+\xi|h_1|^2)^\frac{3}{2}].
	\end{align*}
	The value inside the square brackets can be simplified to
	\begin{align*}
		=& 2+\xi|h_1|^2 - 2\sqrt{1+\xi|h_1|^2}.
	\end{align*}
	Since $2+\xi|h_1|^2 - 2\sqrt{1+\xi|h_1|^2} \geq 0$ because $\xi^2|h_1|^4 \geq 0$, then 
	\begin{align*}
		\Rightarrow & \xi|h_1|^4+|h_2|^2(2+\xi|h_1|^2 - 2\sqrt{1+\xi|h_1|^2}) > 0\\
		\Rightarrow&\frac{dS}{d|h_1|^2}> 0.
	\end{align*}
	Since $|h_1|^2<|h_2|^2$, $\frac{dS}{d|h_1|^2}>0$, and $\frac{dS}{d|h_2|^2}>0$, then $S$ is a monotonically increasing function with respect to $|h_1|^2$ and $|h_2|^2$.
\end{IEEEproof}

\section{Proof of Theorem \ref{thm:DiffCap1}}\label{proof:DiffCap1}
	\begin{IEEEproof}
		For $\xi\gg 1$, 
		\begin{align*}
			\Delta C_1(a_\text{inf})\approx \Delta C_2(a_\text{sup}) \approx \tfrac{1}{2}\log_2(|h_2|^2)-\tfrac{1}{2}\log_2(|h_1|^2).
		\end{align*}The expected value of $\Delta C_1(a_\text{inf})$ and $\Delta C_2(a_\text{sup})$ is then
		\begin{align*}
			&\mathbb{E}\left[\frac{1}{2}\log_2(\frac{|h_2|^2}{|h_1|^2})\right] \approx \int_0^\infty \hspace{-2mm}\int_{0}^{x_2}\frac{1}{\beta^2}e^{-\frac{x_1+x_2}{\beta}}\log_2(x_2)dx_1dx_2 \\
			 &- \int_0^\infty\hspace{-2mm}\int_{x_1}^\infty \frac{1}{\beta^2}e^{-\frac{x_1+x_2}{\beta}}\log_2(x_1)dx_2dx_1\\
			=& \int_0^\infty\frac{1}{\beta}e^{-\frac{x_2}{\beta}}\log_2(x_2)dx_2 - \int_0^\infty\frac{1}{\beta}e^{-\frac{2x_2}{\beta}}\log_2(x_2)dx_2 \\
			 &- \int_0^\infty\frac{1}{\beta}e^{-\frac{2x_1}{\beta}}\log_2(x_1)dx_1\\
			\stackrel{(a)}{=}& \int_0^\infty\frac{1}{\beta}e^{-\frac{x_2}{\beta}}\log_2(x_2)dx_2 
			  - 2\int_0^\infty\frac{1}{\beta}e^{-\frac{2x_2}{\beta}}\log_2(x_2)dx_2 \\
			\stackrel{(b)}{=}& \int_0^\infty\frac{1}{\beta}e^{-\frac{x_2}{\beta}}\log_2(x_2)dx_2 - \int_0^\infty\frac{1}{\beta}e^{-\frac{x}{\beta}}\log_2(x)dx\\
			 & + \int_0^\infty \frac{1}{\beta}e^{-\frac{x}{\beta}}\log_2(2)dx
			 \stackrel{(c)}{=} 1,
			\end{align*}
			where $(a)$ is true because the second two integrals are actually the same integral adding together, $(b)$ is true by making the substitution $x= 2x_2$. and $(c)$ is true because the first two integrals are the same integral subtracting each other. Since $S_\text{N}(a)$ is a monotonically increasing function of $a$, and $S_\text{N}(a_\text{inf})-S_\text{O} = \Delta C_1(a_\text{inf})$ and $S_\text{N}(a_\text{sup})-S_\text{O}=\Delta C_2(a_\text{sup})$, then when $\xi\gg 1$, $\Delta S = S_\text{N}(a)-S_\text{O}\approx 1, \forall a\in\mathcal{A}_\text{FN}$.

	\end{IEEEproof}

\section{Proof of Outage Probability Results} \label{proof:Outage_Probabilities}
\subsection{Proof of property \ref{prop:outage}(\ref{item:outage1})}
\begin{IEEEproof} 
	$p_{\text{N},1}^\text{out}(a_\text{inf}) $
	\begin{align}
		= & \mathrm{Pr}\left\{ \log_2\left(\tfrac{1+\xi|h_1|^2}{1+a_\text{inf}\xi|h_1|^2}\right) < R_0\right\}\\
		\label{eq:p1ainf1} = & \mathrm{Pr}\left\{ |h_1|^2 < \tfrac{|h_2|^2(2^{R_0}-1)}{\xi|h_2|^2 + 2^{R_0}(1-\sqrt{1+\xi|h_2|^2})} \right\}
	\end{align}
	Since $|h_1|^2<|h_2|^2$, then there are two cases as 
	\begin{align}
		&|h_2|^2 \lessgtr \tfrac{|h_2|^2(2^{R_0}-1)}{\xi|h_2|^2 + 2^{R_0}(1-\sqrt{1+\xi|h_2|^2})} = \alpha_1.
	\end{align}
	Solving above for $|h_2|^2$ gives
	\begin{align}
		\Longrightarrow& |h_2|^2 \lessgtr \tfrac{4^{R_0}-2}{2\xi}+\sqrt{\tfrac{4^{R_0}-1}{\xi^2} + \tfrac{(4^{R_0}-2)^2}{4\xi^2}}=\alpha_2.
	\end{align}
	This allows the event in equation (\ref{eq:p1ainf1}) to be written as the two mutually exclusive events given in 
	\begin{align}
		&\left\{ |h_1|^2 < \alpha_1\right\} \\
		&=\left\{ |h_1|^2<|h_2|^2, |h_2|^2< \alpha_2 \right\} \bigcup \left\{ |h_1|^2<\alpha_1, |h_2|^2 > \alpha_2 \right\}. \nonumber
	\end{align}
	The probability in equation (\ref{eq:p1ainf1}) can then be written as $\mathrm{Pr}\{|h_1|^2<\alpha_1\}=\mathrm{Pr}\{  |h_1|^2<|h_2|^2, |h_2|^2< \alpha_2 \} + \mathrm{Pr}\{|h_1|^2<\alpha_1, |h_2|^2 > \alpha_2\}$. The first probability is equal to
	\begin{align}
		&\mathrm{Pr}\{  |h_1|^2<|h_2|^2, |h_2|^2< \alpha_2 \}=\int_0^{\alpha_2}\int_0^{x_2}\frac{2}{\beta}e^{-\frac{x_1+x_2}{\beta}}dx_1dx_2 \nonumber\\
		\label{eq:p1ainf2}&=1+e^{-\frac{2\alpha_2}{\beta}}-2e^{-\frac{\alpha_2}{\beta}}.
	\end{align}
	The second probability is found to be
	\begin{align}
		& \mathrm{Pr}\{|h_1|^2<\alpha_1, |h_2|^2 > \alpha_2\}=\int_{\alpha_2}^\infty\int_0^{\alpha_1}\frac{2}{\beta}e^{-\frac{x_1+x_2}{\beta}}dx_1dx_2 \nonumber\\
		\label{eq:p1ainf3}&= 2e^{-\frac{\alpha_2}{\beta}} -\frac{2}{\beta}\int_{\alpha_2}^\infty e^{-\frac{x_2+\alpha_1}{\beta}}dx_2,
	\end{align}
	where the integral in equation (\ref{eq:p1ainf3}) has no known closed-form solution. Combining equations (\ref{eq:p1ainf2}) and (\ref{eq:p1ainf3}), gives
	\begin{align}
		p_{\text{N},1}^\text{out}(a_\text{inf})=1+e^{-\frac{2\alpha_2}{\beta}}-\frac{2}{\beta}\int_{\alpha_2}^\infty e^{-\frac{x_2+\alpha_1}{\beta}}dx_2
	\end{align}	
\end{IEEEproof}
\subsection{Proof of property \ref{prop:outage}(\ref{item:outage2})}
\begin{IEEEproof}
	$p_{\text{N},2}^\text{out}(a_\text{sup})$
	\begin{align}
		=& \mathrm{Pr}\left\{  \log_2(1+\tfrac{|h_2|^2}{|h_1|^2}(\sqrt{1+\xi|h_1|^2} -1)) < R_0 \right\}\\
			\label{eq:pout2noma1}=& \mathrm{Pr}\left\{ |h_1|^2 > \tfrac{\xi|h_2|^4}{(2^{R_0}-1)^2} - \tfrac{2|h_2|^2}{2^{R_0}-1}  \right\}.
	\end{align}
	Since $0<|h_1|^2<|h_2|^2$ is always true, then the domain of $|h_2|^2$ that makes the statement $\frac{\xi|h_2|^4}{(2^{R_0}-1)^2} - \frac{2|h_2|^2}{2^{R_0}-1}>0$ true or false must be found, and thus gives us two intervals for $|h_2|^2$. 
	\begin{align}
		&\tfrac{\xi|h_2|^4}{(2^{R_0}-1)^2} - \tfrac{2|h_2|^2}{2^{R_0}-1} \lessgtr 0\\
		\Longrightarrow &|h_2|^2 \lessgtr \tfrac{2(2^{R_0}-1)}{\xi}. 
	\end{align}
	For the case of $|h_2|^2 < \frac{2(2^{R_0}-1)}{\xi}$, which gives $\frac{\xi|h_2|^4}{(2^{R_0}-1)^2} - \frac{2|h_2|^2}{2^{R_0}-1} < 0$, the event is explicitly written as
	\begin{align}
		\label{eq:outage1asup}\mathcal{A}_1^\text{out}=\left\{0<|h_1|^2<|h_2|^2, 0<|h_2|^2<\tfrac{2(2^{R_0}-1)}{\xi}\right\}.
	\end{align}
	For the case of $|h_2|^2 > \frac{2(2^{R_0}-1)}{\xi}$, 
	the interval for $|h_1|^2$ is $\frac{\xi|h_2|^4}{(2^{R_0}-1)^2} - \frac{2|h_2|^2}{2^{R_0}-1}< |h_1|^2<|h_2|^2$, so it must also be true that $|h_2|^2 > \frac{\xi|h_2|^4}{(2^{R_0}-1)^2} - \frac{2|h_2|^2}{2^{R_0}-1}$. This gives $|h_2|^2 < \frac{(2^{R_0}-1)^2 + 2(2^{R_0}-1)}{\xi} = \frac{4^{R_0}-1}{\xi}$, and therefore the interval for this event is explicitly written as
	\begin{align}
		\label{eq:outage2asup} \mathcal{A}_2^\text{out}=\left\{ \phantom{\frac{A}{B}}\right. \hspace{-2mm}& \tfrac{\xi|h_2|^4-2|h_2|^2(2^{R_0}-1)}{(2^{R_0}-1)^2}< |h_1|^2<|h_2|^2,  &\\
		 &\left.\tfrac{2(2^{R_0}-1)}{\xi}<|h_2|^2< \frac{4^{R_0}-1}{\xi} \right\}&\nonumber
	\end{align}
	Now the probability above can be derived by computing the probabilities of the two disjoint regions as 
	\begin{align}
		&\mathrm{Pr}\left\{ |h_1|^2 > \tfrac{\xi|h_2|^4}{(2^{R_0}-1)^2} - \tfrac{2|h_2|^2}{2^{R_0}-1}  \right\}= \mathrm{Pr}\{\mathcal{A}_1^\text{out}\} + \mathrm{Pr}\{\mathcal{A}_2^\text{out}\}.
	\end{align}
	The first probability is computed by
	\begin{align}
		&\mathrm{Pr}\{\mathcal{A}_1^\text{out}\} {\color{black} = \int_0^{\frac{2(2^{R_0}-1)}{\xi}}\int_0^{x_2}\frac{2}{\beta^2}e^{-\frac{x_1+x_2}{\beta}}dx_1dx_2\nonumber}\\
		\label{eq:p2asup1}&=1+e^{-\frac{4(2^{R_0}-1)}{\beta\xi}} - 2e^{-\frac{2(2^{R_0}-1)}{\beta\xi}}.
	\end{align}
	%
	Let $K_1=\frac{2(2^{R_0}-1)}{\xi}$ and $K_2=\frac{4^{R_0}-1}{\xi}$. Then the second probability is given by
	\begin{align}
		&\mathrm{Pr}\{\mathcal{A}_2^\text{out}\} = \int_{K_1}^{K_2}\int_{\frac{\xi x_2^2-2x_2^{\phantom{2}}(2^{R_0}-1)}{(2^{R_0}-1)^2}}^{x_2}\frac{2}{\beta^2}e^{-\frac{x_1+x_2}{\beta}}dx_1dx_2\\
		\label{eq:p2asup2p1}&= \int_{K_1}^{K_2} \frac{2}{\beta}\left( e^{-\left(\frac{x_2}{\beta}+\frac{\xi x_2^2-2x_2^{\phantom{2}}(2^{R_0}-1)}{\beta(2^{R_0}-1)^2}\right) } - e^{-\frac{2x_2}{\beta}} \right)dx_2
	\end{align}
	The second term in the integral in equation (\ref{eq:p2asup2p1}) can be easily computed to be 
	\begin{align}
		&\label{eq:p2asup2p1-1}\int_{K_1}^{K_2} \frac{2}{\beta}e^{-\frac{2x_2}{\beta}}dx_2=-e^{- \frac{2(4^{R_0}-1)}{\beta\xi}} + e^{-\frac{4(2^{R_0}-1)}{\beta\xi}}.
	\end{align}
	The first integral in equation (\ref{eq:p2asup2p1}) is computed by completing the square in the exponent as 
	\begin{align}
		& \int_{K_1}^{K_2} \frac{2}{\beta} e^{-\left(\frac{x_2}{\beta}+\frac{\xi x_2^2-2x_2^{\phantom{2}}(2^{R_0}-1)}{\beta(2^{R_0}-1)^2}\right) } dx_2\\
		\label{eq:p2asup2p2}&= {\color{black}\int_{K_1}^{K_2} \frac{2}{\beta} e^{-\frac{\xi}{\beta(2^{R_0}-1)^2}\left(x_2^2 + \frac{x_2^{\phantom{2}}(2^{R_0}-1)(2^{R_0}-3)}{\xi}\right) } dx_2.}
	\end{align}
	Since 
	\begin{align}
		&x_2^2 + \tfrac{x_2(2^{R_0}-1)(2^{R_0}-3)}{\xi}\\
		=&\left(x_2 + \tfrac{(2^{R_0}-1)(2^{R_0}-3)}{2\xi}\right)^2 -  \left(\tfrac{(2^{R_0}-1)(2^{R_0}-3)}{2\xi}\right)^2\hspace{-1mm},
	\end{align}
	then equation (\ref{eq:p2asup2p2}) equals
	\begin{align}
		\label{eq:p2asup2p3}&=  \frac{2}{\beta} e^{\frac{(2^{R_0}-3)^2}{4\beta\xi}}\int_{K_1}^{K_2} e^{-\frac{\xi}{\beta(2^{R_0}-1)^2}\left(x_2^{\phantom{2}} + \frac{(2^{R_0}-1)(2^{R_0}-3)}{2\xi}\right)^2} dx_2	.
	\end{align}
	By using the substitution 
	\begin{align}
		\label{eq:p2outsub}u(x_2) = \tfrac{1}{2^{R_0}-1}\sqrt{\tfrac{\xi}{\beta}}\left(x_2 + \tfrac{(2^{R_0}-1)(2^{R_0}-3)}{2\xi}\right),
	\end{align}
	the integral in equation (\ref{eq:p2asup2p3}) equals
	\begin{align}
		\label{eq:p2asup2p4}&=\frac{2}{\beta} e^{\frac{(2^{R_0}-3)^2}{4\beta\xi}}\int_{u(K_1)}^{u(K_2)} e^{-u^2}\cdot(2^{R_0}-1)\sqrt{\frac{\beta}{\xi}}du.\\
		&=\label{eq:p2asup2p5}(2^{R_0}-1)e^{\frac{(2^{R_0}-3)^2}{4\beta\xi}}\sqrt{\frac{\pi}{\beta\xi}}\cdot[\mathrm{erfc}\left(u(K_1)\right) - \mathrm{erfc}\left(u(K_2)\right)], 
	\end{align}
	where $u(x)$ is obtained by equation (\ref{eq:p2outsub}), and thus $u(K_1) = \frac{2^{R_0}+1}{2\sqrt{\beta\xi}}$, $u(K_2) = \frac{3(2^{R_0})-1}{2\sqrt{\beta\xi}}$, and $\mathrm{erfc}(z)=\frac{2}{\sqrt{\pi}}\int_z^\infty e^{-u^2} du$ is the complementary error function.
	Thus, combining equations (\ref{eq:p2asup1}, \ref{eq:p2asup2p1-1}, \ref{eq:p2asup2p5}) results in equation (\ref{eq:outage2}).
\end{IEEEproof}

\section{Proof of Theorem \ref{thm:NOMAminmax}}\label{proof:NOMAminmax}
\begin{IEEEproof}
When $\xi |h_0|^2 \gg 1$, $\Delta S(a_\text{sup}) \approx \frac{1}{2}(\log_2(|h_M|^2)-\log_2(|h_0|^2))$. Therefore by equation (\ref{eq:expminmaxlong}),

\begin{figure*}[http!]
	\begin{align}
	\mathbb{E}\left[ \Delta S(a_\text{sup}) \right] \approx  &\int_0^\infty\int_0^{x_M}\tfrac{1}{2}\log_2 (x_M) \tfrac{K(K-1)}{\beta^2}e^{-\frac{x_0+x_M}{\beta}}(e^{-\frac{x_0}{\beta}}-e^{-\frac{x_M}{\beta}})^{K-2}dx_0dx_M \nonumber\\
	  \label{eq:expminmaxlong}&- \int_0^\infty\int_{x_0}^\infty\tfrac{1}{2}\log_2 (x_0) \tfrac{K(K-1)}{\beta^2}e^{-\frac{x_0+x_M}{\beta}}(e^{-\frac{x_0}{\beta}}-e^{-\frac{x_M}{\beta}})^{K-2}dx_Mdx_0
	\end{align}
	\hrule
	\vspace{-5mm}
\end{figure*} \vspace{-5mm}

\begin{align*}
	&\mathbb{E}\left[ \Delta S(a_\text{sup}) \right] \\
	  \approx& \int_0^\infty\log_2 (x_M) \frac{K}{2\beta}e^{-\frac{x_M}{\beta}}(1-e^{-\frac{x_M}{\beta}})^{K-1}dx_M \\
	   &- \int_0^\infty\log_2 (x_0) \frac{K}{2\beta}e^{-\frac{Kx_0}{\beta}}dx_0 \\ 
	  =& \int_0^\infty\log_2 (x_M) \frac{K}{2\beta}e^{-\frac{x_M}{\beta}} \sum_{n=0}^{K-1} \tbinom{K-1}{n}(-1)^{n}e^{-\frac{nx_M}{\beta}}dx_M \\
	   &- \int_0^\infty\log_2 (x_0) \frac{K}{2\beta}e^{-\frac{Kx_0}{\beta}}dx_0 \\
	  =& \sum_{n=0}^{K-1} \int_0^\infty\log_2 \left(\frac{x}{n+1}\right) \frac{K}{2(n+1)\beta} \tbinom{K-1}{n}(-1)^{n}e^{-\frac{x}{\beta}}dx \\
	   &- \int_0^\infty\log_2\left(\frac{x}{K}\right) \frac{1}{2\beta}e^{-\frac{x}{\beta}}dx \\	  
	  =& \int_0^\infty \log_2 \left(\frac{K}{x}\cdot \prod_{n=0}^{K-1} \left(\frac{x}{n+1}\right)^{\binom{K}{n+1}(-1)^{n}} \right) \frac{1}{2\beta}e^{-\frac{x}{\beta}}dx \\
	  =&  \frac{1}{2}\log_2(K) + \frac{1}{2}\sum_{m=1}^K\tbinom{K}{m}(-1)^{m} \log_2 (m) .
\end{align*}
\end{IEEEproof}


\section{Proof of Property \ref{prop:multipower}}\label{proof:multipower}
\begin{IEEEproof}
It is already established that $a_1, A_1\in(0,1)$. The power allocation coefficient for user 2 is found by the equation
\begin{align}\label{eq:poweru2derive}
	(1+\xi|h_2|^2)^\frac{1}{K} =& \frac{1+A_1\xi|h_2|^2}{1+(A_1-a_2)\xi|h_2|^2}\nonumber\\
	\Longrightarrow a_2 =& \frac{(1+A_1\xi|h_2|^2)[(1+\xi|h_2|^2)^\frac{1}{K}-1]}{\xi|h_2|^2(1+\xi|h_2|^2)^\frac{1}{K}}.
\end{align}
If the following is true
\begin{equation}\label{ineq:poweru2}
		1< \frac{1+A_1\xi|h_2|^2}{1+(A_1-a_2)\xi|h_2|^2} < 1+A_1\xi|h_2|^2,
\end{equation}
then clearly $a_2\in(0, A_1)$. However, for equation (\ref{eq:poweru2derive}) and inequality (\ref{ineq:poweru2}) to be true, it must be true that
\begin{equation}
			1< (1+\xi|h_2|^2)^\frac{1}{K} < 1+A_1\xi|h_2|^2. 
\end{equation}
It is trivial to show that $1<(1+\xi|h_2|^2)^\frac{1}{K}, \forall \xi, |h_2|^2>0$. To show that $(1+\xi|h_2|^2)^\frac{1}{K}<1+A_1\xi|h_2|^2, \forall \xi>0$, the inequality is rearranged so that
\begin{equation}
	\gamma_2 < A_1,
\end{equation}
where
\begin{equation}
	\gamma_k = \frac{(1+\xi|h_k|^2)^\frac{1}{K}-1}{\xi|h_k|^2}.
\end{equation}
The inequality $\gamma_2 < A_1$ is clearly true because $\gamma_2 < \gamma_1$, and $\gamma_1 < A_1$ because $(1+\xi|h_1|^2)^\frac{m}{K} < (1+\xi|h_1|^2)^\frac{K-1}{K},\forall m<K-1$. Therefore, equation (\ref{eq:poweru2derive}) and inequality (\ref{ineq:poweru2}) are true. In a similar manner, in order for the power allocation coefficient $a_k$ for user $k$ to be less than total interference $A_{k-1}$ received by user $k-1$, the following must be true:
\begin{align}
	\label{eq:munomapowerk1}&a_k = \frac{(1+A_{k-1}\xi|h_k|^2)[(1+\xi|h_k|^2)^\frac{1}{K}-1]}{\xi|h_k|^2(1+\xi|h_k|^2)^\frac{1}{K}},\\
	\label{eq:munomapowerk2}&1< \frac{1+A_{k-1}\xi|h_k|^2}{1+(A_{k-1}-a_k)\xi|h_k|^2} < 1+A_{k-1}\xi|h_k|^2,\\
	\label{eq:munomapowerk3}&1< (1+\xi|h_k|^2)^\frac{1}{K} < 1+A_{k-1}\xi|h_k|^2.
\end{align}
Equation (\ref{eq:munomapowerk1}) is true by solving eq. (\ref{eq:powerboundK}), while (\ref{eq:munomapowerk2}) states that $a_k\in(0,A_{k-1})$ and (\ref{eq:munomapowerk3}) requires that user $k$'s OMA capacity is feasible within $a_k\in(0,A_{k-1})$, given the channel condition of user $k$. Therefore, (\ref{eq:munomapowerk3}) leads to 
\begin{align}
	&\gamma_k < A_{k-1} = A_{k-2}-a_{k-1} 
		= 	\frac{A_{k-2}-\gamma_{k-1}}{(1+\xi|h_{k-1}|^2)^\frac{1}{K}} \nonumber\\
\label{eq:lbfrac0} &\Longrightarrow A_{k-2}> \nonumber\\
&\frac{(1+\xi|h_k|^2)^\frac{1}{K}-1}{\xi|h_k|^2}(1+\xi|h_{k-1}|^2)^\frac{1}{K}  +\frac{(1+\xi|h_{k-1}|^2)^\frac{1}{K}-1}{\xi|h_{k-1}|^2}
\end{align}
Since the function
\begin{equation}\label{eq:monotonicfrac}
	f(x) = \frac{(1+x)^\frac{m}{K}-1}{x}, \forall m<K, m\text{ and }K\in\mathbb{N}
\end{equation}
is a monotonically decreasing function of $x$, then
{\small 
\begin{align}
	\label{eq:lbfrac1} &\frac{(1+\xi|h_k|^2)^\frac{1}{K}-1}{\xi|h_k|^2}(1+\xi|h_{k-1}|^2)^\frac{1}{K} 
			+ \frac{(1+\xi|h_{k-1}|^2)^\frac{1}{K}-1}{\xi|h_{k-1}|^2} \\
			&< \frac{(1+\xi|h_{k-1}|^2)^\frac{1}{K}-1}{\xi|h_{k-1}|^2}(1+\xi|h_{k-1}|^2)^\frac{1}{K} 
			+\frac{(1+\xi|h_{k-1}|^2)^\frac{1}{K}-1}{\xi|h_{k-1}|^2} \\
\label{eq:lbfrac2}					&= \frac{(1+\xi|h_{k-1}|^2)^\frac{2}{K}-1}{\xi|h_{k-1}|^2} < A_{k-2} 
			= A_{k-3}-a_{k-2} \nonumber\\
			&= \frac{ A_{k-3}-\gamma_{k-2}}{(1+\xi|h_{k-2}|^2)^\frac{1}{K}} \\
\label{eq:lbfrac3}	&	\Longrightarrow	A_{k-3}> \frac{(1+\xi|h_{k-1}|^2)^\frac{2}{K}-1}{\xi|h_{k-1}|^2}  (1+\xi|h_{k-2}|^2)^\frac{1}{K} \\
			& +\frac{(1+\xi|h_{k-2}|^2)^\frac{1}{K}-1}{\xi|h_{k-2}|^2}.
\end{align}}
The inequality in (\ref{eq:lbfrac3}) has the same form as the inequality in (\ref{eq:lbfrac0}), so the same steps taken in inequalities (\ref{eq:lbfrac1}) and (\ref{eq:lbfrac2}) can be used repeatedly, until the following is obtained
\begin{equation}
	\frac{(1+\xi|h_{1}|^2)^\frac{k-1}{K}-1}{\xi|h_{1}|^2} \leq A_{1},
\end{equation}
which is true $\forall k\leq K$. Hence, this series of inequalities shows that the transmit power allocation coefficient $a_k$ required for user $k$ to achieve OMA capacity is always less than the total interference coefficient received by user $k-1$, which equals the total fraction of power available for users $k,\ldots,K$. 

\end{IEEEproof}



\begin{IEEEbiography}[{\includegraphics[width=1in,height=1.25in,clip,keepaspectratio]{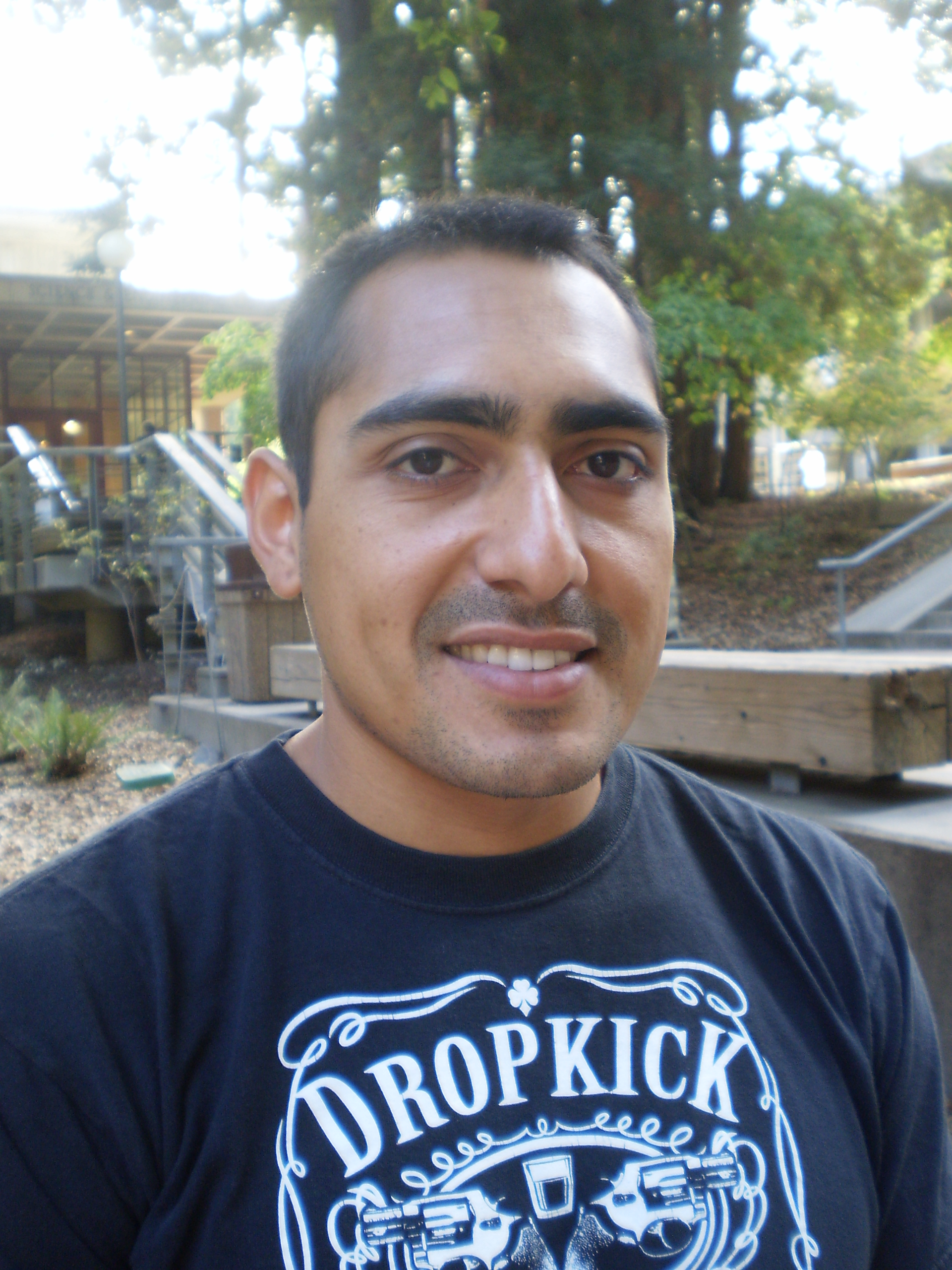}}]{Jos\'{e} Armando Oviedo}
Jos\'{e} Armando Oviedo received the B.S. in Electrical Engineering in 2009 from the California State Polytechnic University, Pomona, and the M.S. in Electrical Engineering in 2010 from the University of California, Riverside. He is currently pursuing the Ph.D. in Electrical Engineering at the University of California, Santa Cruz. 

His main areas of interest are non-orthogonal multiple access and multi-user diversity in multi-user wireless communication systems. 
\end{IEEEbiography}

\begin{IEEEbiography}[{\includegraphics[width=1in,height=1.25in,clip,keepaspectratio]{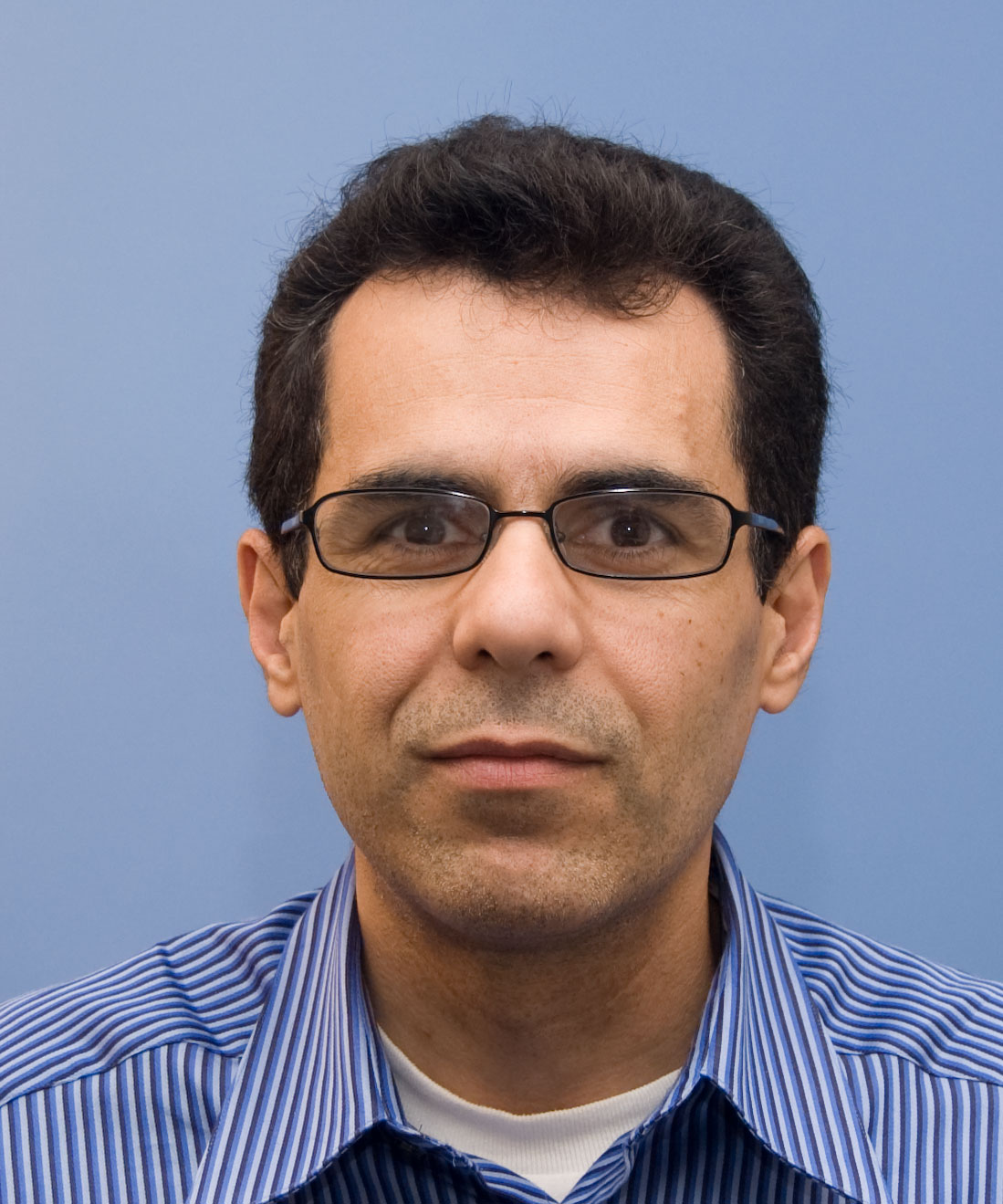}}]{Hamid R. Sadjadpour}
Hamid R. Sadjadpour (S’94–M’95–SM’00) received the B.S. and M.S. degrees from the Sharif University of Technology, and the Ph.D. degree from the University of Southern California at Los Angeles, Los Angeles, CA. In 1995, he joined the AT\&T Research Laboratory, Florham Park, NJ, USA, as a Technical Staff Member and later as a Principal Member of Technical Staff. In 2001, he joined the University of California at Santa Cruz, Santa Cruz, where he is currently a Professor. He has authored over 170 publications. He holds 17 patents. His research interests are in the general areas of wireless communications and networks. He has served as a Technical Program Committee Member and the Chair in numerous conferences. He is a co-recipient of the best paper awards at the 2007 International Symposium on Performance Evaluation of Computer and Telecommunication Systems and the 2008 Military Communications conference, and the 2010 European Wireless Conference Best Student Paper Award. He has been a Guest Editor of EURASIP in 2003 and 2006. He was
a member of the Editorial Board of Wireless Communications and Mobile Computing Journal (Wiley), and the  Journal OF Communications and Networks.
\end{IEEEbiography}

\end{document}